%-----------------------------------------------------------------------------
%    main text for the PPoPP 2016 submission
%              
%-----------------------------------------------------------------------------

%\documentclass[preprint]{sig-alternate} 
% \documentclass[sigconf, review]{acmart}
\documentclass[conference, 10pt]{IEEEtran}
\IEEEoverridecommandlockouts
% The preceding line is only needed to identify funding in the first footnote. If that is unneeded, please comment it out.

% \newcommand{\note}[2]{}
% \newif\ifsigalt\sigalttrue 

% The following \documentclass options may be useful:

% preprint      Remove this option only once the paper is in final form.
% 10pt          To set in 10-point type instead of 9-point.
% 11pt          To set in 11-point type instead of 9-point.
% authoryear    To obtain author/year citation style instead of numeric.

% \usepackage{supertech-sig-light}
\usepackage{supertech-llncs-light}
\usepackage{booktabs} % for formal tables

% Following BibTex copied from IEEEtrans
\def\BibTeX{{\rm B\kern-.05em{\sc i\kern-.025em b}\kern-.08em
    T\kern-.1667em\lower.7ex\hbox{E}\kern-.125emX}}

% \newcommand{\textrightarrow}{$\rightarrow$}
% DOI
% \acmDOI{10.475/123_4}

% ISBN
% \acmISBN{123-4567-24-567/08/06}

%Conference
% \acmConference[IPDPS'20]{ACM conference}{Jan. 15--17, 2020}{Fukuoka, Japan}
% \acmYear{2020}
% \copyrightyear{2020}
% \setcopyright{rightsretained}

% \acmPrice{15.00}

% \acmSubmissionID{123-a12-B3}

\begin{document}

%% we may remove the following format instructions for conformation
%\special{papersize=8.5in,11in}
% \setlength{\pdfpageheight}{\paperheight}
% \setlength{\pdfpagewidth}{\paperwidth}
% \setlength{\belowdisplayskip}{0ex} \setlength{\belowdisplayshortskip}{0ex}
% \setlength{\abovedisplayskip}{0ex} \setlength{\abovedisplayshortskip}{0ex}
% %% set the space before/after in-text figures
% \setlength{\floatsep}{0ex}
% \setlength{\intextsep}{1 \baselineskip} 
% \setlength{\textfloatsep}{1 \baselineskip}
% \setlength{\abovecaptionskip}{0pt} \setlength{\belowcaptionskip}{0ex}

\title{Improving the Space-Time Efficiency of Processor-Oblivious Matrix Multiplication Algorithms
\thanks{Shanghai Natural Science Funding (No. 18ZR1403100)}
}
% \subtitle{Extended Abstract}

\punt{% begin punt
    % Following Author's info is aligned with ACM
\author{Yuan Tang}
\authornote{Corresponding Author. 
Also affiliated with Shanghai Key Lab. of Intelligent Information Processing}
\affiliation{%
    \institution{School of Software, School of Computer Science, Fudan University}
    \city{Shanghai}
    \state{China}
}
\email{yuantang@fudan.edu.cn}
}% end punt
\author{\IEEEauthorblockN{Yuan Tang}
    \IEEEauthorblockA{\textit{School of Computer Science, Fudan University}\\
    Shanghai, P. R. China\\
    yuantang@fudan.edu.cn}
}
\maketitle

\begin{abstract}
% State the problem
% Why it's an interesting problem
% What your solution achieves
% What follows from your solution: Impact of your solution

Classic cache-oblivious parallel matrix multiplication algorithms
achieve optimality either in time or space, but not both, which
promotes lots of research on the best possible balance or
tradeoff of such algorithms.
We study modern processor-oblivious runtime systems and figure 
out several ways to improve algorithm's time bound while still
bounding space and cache requirements to be asymptotically
optimal.
By our study, we give out sublinear time, optimal work, space
and cache algorithms for both general matrix multiplication on a 
semiring and Strassen-like fast algorithm.
Our experiments also show such algorithms have empirical 
advantages over classic counterparts.
Our study provides new insights and research angles on how
to optimize cache-oblivious parallel algorithms
from both theoretical and empirical perspectives.
\end{abstract}

\punt{% begin punt
%% \begin{CCSXML}
%% <ccs2012>
%%  <concept>
%%   <concept_id>10003752.10003809.10011254.10011257</concept_id>
%%   <concept_desc>Theory of computation~Divide and conquer</concept_desc>
%%   <concept_significance>500</concept_significance>
%%  </concept>
%%  <concept>
%%   <concept_id>10003752.10003809.10011254.10011258</concept_id>
%%   <concept_desc>Theory of computation~Dynamic programming</concept_desc>
%%   <concept_significance>500</concept_significance>
%%  </concept>
%%  <concept>
%%   <concept_id>10010147.10010169.10010170.10010171</concept_id>
%%   <concept_desc>Computing methodologies~Shared memory algorithms</concept_desc>
%%   <concept_significance>500</concept_significance>
%%  </concept>
%% </ccs2012>
%% \end{CCSXML}
%% 
%% \ccsdesc[500]{Theory of computation~Divide and conquer}
%% \ccsdesc[500]{Theory of computation~Dynamic programming}
%% \ccsdesc[500]{Computing methodologies~Shared memory algorithms}
}% end punt

% \keywords{
\begin{IEEEkeywords}
space-time efficiency,
cache-oblivious parallel algorithm,
shared-memory multi-core or many-core architecture,
% divide-and-conquer,
matrix multiplication,
modern processor-oblivious runtime
\end{IEEEkeywords}
% }

\secput{intro}{Introduction}

%%
%\vspace*{-1em}
\begin{wrapfigure}{R}{0.26\textwidth}
% \vspace{-.82cm}
% \hspace{-.4in}
\caption{Acronyms \& Notations}
\label{fig:symbols}
\scalebox{0.85}{
%\scriptsize{
%\begin{flushright}
\begin{tabular}{|c|p{3.8cm}|}
\toprule
MM & Matrix Multiplication\\
PO & Processor-Oblivious\\
PA & Processor-Aware\\
% DP  & Dynamic Programming\\
% COP & Cache-Oblivious Parallel\\
% COW & Cache-Oblivious Wavefront\\
RWS & Randomized Work-Stealing\\
CAS & Compare-And-Swap\\
% ND  & Nested Dataflow\\
$n$ & Problem dimension\\
$p$ & Processor Count\\
$\epsilon_i$ & small constant\\
$M$ & Cache size\\
$B$ & cache line size\\
$T_1$ & Work\\
$T_\infty$ & Time (span, depth, critical-path length)\\
$T_p$ & Running time on $p$-processor system\\
$T_1/T_{\infty}$ & Parallelism\\
$Q_1$ & Serial cache complexity\\
$Q_p$ & Parallel cache complexity with $p$ threads\\
$\id{a} \parallel \id{b}$ & task \id{b} has \emph{no} dependency on \id{a}\\
$\id{a} \serial \id{b}$ & task \id{b} has \emph{full} dependency on \id{a}\\
% $\id{a} \fire \id{b}$ & task \id{b} has \emph{partial} dependency on \id{a}\\
\bottomrule
\end{tabular}
%\end{flushright}
%}
}
\vspace{-0.4cm}
\end{wrapfigure}
%\vspace{5em}
%

%
\begin{figure*}[!ht]
\centering
\begin{tabular}{ccccc}
\toprule
Algo. & Work ($T_1$) & Time ($T_\infty$) & Space ($S_p$) & Serial Cache ($Q_1$) \\
\midrule
\proc{CO2} & $O(n^3)$ & $O(n)$ & $O(n^2)$ & $O(n^3/(B\sqrt{M}) + n^2/B)$ \\
\proc{CO3} & $O(n^3)$ & $O(\log n)$ & $O(n^3)$ & $O(n^3/B)$ \\
\midrule
\proc{TAR-MM} & $O(n^3)$ & $O(n)$ & $O(n^2 + p b^2)$ & $O(n^3/(B\sqrt{M}) + n^2/B)$ \\
\proc{SAR-MM} & $O(n^3)$ & $O(\log n)$ & $O(p^{1/3} n^2)$ & $O(n^3/(B\sqrt{M}) + n^2/B)$ \\
\proc{STAR-MM} & $O(n^3)$ & $O(\sqrt{p} \log n)$ & $O(n^2)$ & $O(n^3/(B\sqrt{M}) + n^2/B)$ \\
\midrule
\proc{Strassen} & $O(n^{\log_2 7})$ & $O(\log n)$ & $O(n^{\log_2 7})$ & $O(n^{\log_2 7}/B)$ \\
\midrule
\proc{SAR-Strassen} & $O(n^{\log_2 7})$ & $O(\log n)$ & $O(p n^2)$ & $O(n^{\log_2 7}/(B M^{1/2 \log_2 7 - 1}) + n^2/B)$ \\
\proc{STAR-Strassen-1} & $O(p^{0.09} n^{\log_2 7})$ & $O(p^{1/2} \log n)$ & $O(n^2)$ & $O(p^{0.09} n^{\log_2 7}/(B M^{1/2 \log_2 7 - 1}) + p^{1/2} n^2/B)$ \\
\proc{STAR-Strassen-2} & $O(n^{\log_2 7})$ & $O(\log n)$ & $O(p^{1/2 \log_2 7} n^2)$ & $O(n^{\log_2 7}/(B M^{1/2 \log_2 7 - 1}) + p^{1/2 \log_2 7 - 1} n^2/B)$ \\
\bottomrule
\end{tabular}
\caption{Main results of this paper, with comparisons to 
typical prior works. \proc{CO2} stands for the 
MM (Matrix Multiplication) algorithm with $O(n^2)$ space 
(\figref{mm-n2});
\proc{CO3} stands for the MM algorithm with $O(n^3)$ space
(\figref{mm-n3}); $p$ denotes processor count, $b$ is the 
base-case dimension.}
\label{fig:contri-table}
\end{figure*}

It's important to balance space-time requirements for an
algorithm to achieve high performance on modern shared-memory 
multi-core and many-core systems. 
There are two big classes of parallel algorithms. One is
processor-aware (PA), the other is processor-oblivious (PO).

Typical PA algorithms for Matrix Multiplication (MM) 
\cite{AgarwalBaGu95, Cannon69, AggarwalChSn90, DekelNaSa81, Johnsson93, IronyToTi04, ChowdhuryRa08, BallardDeHo11}
seem hard to achieve a perfect load balance on an arbitrary 
number, e.g. a prime number, of processors.
The PA approach usually maps statically MM's computational 
DAG (Directed Acyclic Graph) of dimensions $n$-by-$m$-by-$k$
, which stands for a multiplication of one $n$-by-$k$ matrix 
with one $k$-by-$m$ matrix,  
onto a $2$D processor grid of dimensions
$p^{1/2}$-by-$p^{1/2}$ \cite{Cannon69, ChowdhuryRa08}
, or a $3$D grid of $p^{1/3}$-by-$p^{1/3}$-by-$p^{1/3}$ 
\cite{AgarwalBaGu95}, or even a $2.5$D grid of 
$(p/c)^{1/2}$-by-$(p/c)^{1/2}$-by-$c$
\cite{SolomonikDe11}, where $p$ is processor count and 
$c \in \{1, 2, \ldots, p^{1/3}\}$ is a parameter depending
on the availability of redundant storage. 
There are several concerns on classic PA algorithms.
\begin{enumerate}
    \item There is no guarantee that $p^{1/2}$ or $p^{1/3}$ will 
        be an integer number, which means that some degree of
        under-utilization of processors is unavoidable. 
        In theory, $p$ can even be a 
        prime number. Hence, various tradeoffs need to be 
        explored for a balance.
    \item Obviously we need different shapes of processor grid
        when multiplying different shapes of MM
        for a balance of computation and communication (overall
        cache misses in the case of shared-memory architecture).
        For instances, multiplying
        a $1$-by-$n$ vector with an $n$-by-$1$ vector,
        an $n$-by-$1$ vector with a $1$-by-$n$ vector,
        or a general $n$-by-$m$ matrix with an $m$-by-$k$ matrix
        require different shapes of processor grid for an
        optimal balance in both computation and communication.
        However, the number of factorizations of $p$ is
        usually much less than the number of possible shapes
        of MM, thus prior algorithms of $2$D, $2.5$D, or $3$D 
        are not flexible to adapt. 
\end{enumerate}

The PO approach, on the contrary, just needs to specify the
data and control dependency of computational DAG, then relies 
on a provably efficient
runtime scheduler such as Cilk \cite{IntelCilkPlus10, Leiserson10}
for a dynamic balance. 
Assuming a Randomized Work-Stealing scheduler \cite{BlumofeJoKu95}
, an algorithm just 
needs to bound its critical-path length, also known as (a.k.a.) 
depth, span, or time bound \cite{JaJa92}, to be polylogarithmic, 
i.e. low-depth \cite{BlellochGiSi10}, its parallel running time
is then $O(T_1/p + T_\infty)$ with high probability (w.h.p.) 
\cite{AroraBlPl98}, where
$T_1$ denotes total work and $T_\infty$ denotes critical-path 
length.
Similarly, if an algorithm's serial cache complexity 
is asymptotically optimal, its parallel 
cache complexity can be bounded by \eqref{parallel-cache-bound} 
\cite{AcarBlBl00, SpoonhowerBlGi09}, where $Q_1$ denotes serial
cache bound, $Q_p$ denotes parallel cache bound, $M$ is cache 
size and $B$ is cache line size of the ideal cache model 
\cite{FrigoLePr12}.
\begin{align}
Q_p &= Q_1 + O(p T_\infty M/B) \label{eq:parallel-cache-bound}
\end{align}

From \eqref{parallel-cache-bound}, we can see that for a PO
algorithm to have efficient parallel cache bound, both 
its serial cache bound ($Q_1$) and critical-path
length ($T_\infty$) have to be optimal.
However, if we look at two typical PO MM algorithms in 
\figref{mm-dac}, we can see that one algorithm (\proc{CO2}
in \figref{mm-n2}) has optimal serial cache bound but
sub-optimal (linear) critical-path length and the other 
(\proc{CO3} in \figref{mm-n3}) has optimal critical-path 
length but non-optimal serial cache and space bounds. 
To the best of our knowledge, no existing PO MM algorithm 
achieves both sublinear critical-path length and optimal
serial cache bound.

Let's take a closer look at general MM $C = A \otimes B$ 
on a closed semiring 
$SR = (S, \oplus, \otimes, 0, 1)$, where $S$ is 
a set of elements, $\oplus$ and $\otimes$ are binary operations 
on $S$, and $0$, $1$ are additive and multiplicative identities
, respectively. For simplicity, we discuss only square MM.
General MM can be computed recursively in a divide-and-conquer 
fashion as follows.
At each level of recursion, the computation of an MM of dimension 
$n$ (i.e. multiplication of two $n$-by-$n$ matrices) is divided 
into four equally sized 
quadrants, which require updates from eight sub-MMs of
dimension $n/2$ as shown in \eqref{mm-dac}.
%
% \begin{figure}[!ht]
\begin{align}
    \begin{bmatrix}
        C_{00} & C_{01} \\
        C_{10} & C_{11}
    \end{bmatrix}
    &= 
    \begin{bmatrix}
        A_{00} & A_{01} \\
        A_{10} & A_{11}
    \end{bmatrix} 
    \otimes
    \begin{bmatrix}
        B_{00} & B_{01} \\
        B_{10} & B_{11}
    \end{bmatrix} \nonumber \\
    &=
    \begin{bmatrix}
        A_{00} \otimes B_{00} & A_{00} \otimes B_{01} \\
        A_{10} \otimes B_{00} & A_{10} \otimes B_{01}
    \end{bmatrix} \nonumber \\
    &\qquad \oplus 
    \begin{bmatrix}
        A_{01} \otimes B_{10} & A_{01} \otimes B_{11} \\
        A_{11} \otimes B_{10} & A_{11} \otimes B_{11}
    \end{bmatrix}
\label{eq:mm-dac}
\end{align}
% \end{figure}
%
\noindent
Depending on the availability of extra space, the computation of 
eight sub-MMs can be scheduled to run either completely in 
parallel as shown in \figref{mm-n3} or in two parallel steps 
as \figref{mm-n2} \cite{CormenLeRi09}. More sophisticated 
approaches are studied in the literature and will be discussed in 
\secref{relWork}.

\begin{figure*}
\hspace*{-1cm}
\begin{subfigure}[b]{0.45 \linewidth}
%\scriptsize
\begin{codebox}
% \resizebox{.45\linewidth}{!}{% begin resizebox
\Procname{\proc{CO3}(\id{C}, \id{A}, \id{B})}
\li \Comment $\id{C} \assign \id{A} \times \id{B}$
\li \If $(\func{sizeof}(\id{C}) \leq \const{base\_size})$
\li     \Then \proc{base-kernel}(\id{C}, \id{A}, \id{B})
\li           \Return
        \End
\li $\id{D} \assign \func{alloc}(\func{sizeof}(C))$ \label{li:mm-n3-alloc-D}
\li \Comment Run all $8$ sub-MMs concurrently
\li $\proc{CO3}(\id{C}_{00}, \id{A}_{00}, \id{B}_{00}) \parallel \proc{CO3}(\id{C}_{01}, \id{A}_{00}, \id{B}_{01})$
\li $\parallel \proc{CO3}(\id{C}_{10}, \id{A}_{10}, \id{B}_{00}) \parallel \proc{CO3}(\id{C}_{11}, \id{A}_{10}, \id{B}_{01})$
\li $\parallel \proc{CO3}(\id{D}_{00}, \id{A}_{01}, \id{B}_{10}) \parallel \proc{CO3}(\id{D}_{01}, \id{A}_{01}, \id{B}_{11})$
\li $\parallel \proc{CO3}(\id{D}_{10}, \id{A}_{11}, \id{B}_{10}) \parallel \proc{CO3}(\id{D}_{11}, \id{A}_{11}, \id{B}_{11})$
\li $\serial$ \label{li:co3-sync} \Comment \Sync
\li \Comment Merge matrix \id{D} into \id{C} by addition
\li \func{madd}(\id{C}, \id{D}) \label{li:co3-madd}
\li \func{free} (\id{D})
\li \Return
% }% end resizebox
\end{codebox}
\vspace*{-1 \baselineskip}
\caption{Recursive MM algorithm with $O(n^3)$ space}
\label{fig:mm-n3}
\end{subfigure}
%
% \hfil
%
\hspace*{1cm}
\begin{subfigure}[b]{0.45 \linewidth}
%\scriptsize
\begin{codebox}
% \resizebox{.45\linewidth}{!}{% begin resizebox
\Procname{\proc{CO2}(\id{C}, \id{A}, \id{B})}
\li \Comment $\id{C} \assign \id{A} \times \id{B}$
\li \If $(\func{sizeof}(\id{C}) \leq \const{base\_size})$
\li     \Then \proc{base-kernel}(\id{C}, \id{A}, \id{B})
\li           \Return
        \End
\li \Comment Run the first $4$ sub-MMs concurrently
\li $\proc{CO2}(\id{C}_{00}, \id{A}_{00}, \id{B}_{00}) \parallel \proc{CO2}(\id{C}_{01}, \id{A}_{00}, \id{B}_{01})$
\li $\parallel \proc{CO2}(\id{C}_{10}, \id{A}_{10}, \id{B}_{00}) \parallel \proc{CO2}(\id{C}_{11}, \id{A}_{10}, \id{B}_{01})$
\li $\serial$ \Comment \Sync \label{li:mm-n2-sync}
\li \Comment Run the next $4$ sub-MMs concurrently
\li $\proc{CO2}(\id{C}_{00}, \id{A}_{01}, \id{B}_{10}) \parallel \proc{CO2}(\id{C}_{01}, \id{A}_{01}, \id{B}_{11})$
\li $\parallel \proc{CO2}(\id{C}_{10}, \id{A}_{11}, \id{B}_{10}) \parallel \proc{CO2}(\id{C}_{11}, \id{A}_{11}, \id{B}_{11})$
\li $\serial$ \Comment \Sync
\li \Return
% }% end resizebox
\end{codebox}
\vspace*{-1 \baselineskip}
\caption{Recursive MM algorithm with $O(n^2)$ space}
\label{fig:mm-n2}
\end{subfigure}
\caption{Recursive MM algorithms.
``$\parallel$'' and ``$\serial$'' are linguistic
constructs of Nested Dataflow model \cite{DinhSiTa16}
(\figref{symbols}, \secref{model}).
}
\label{fig:mm-dac}
\end{figure*}

% Briefly describe the theoretic models to calculate the bounds
% Move the description to the model section !!
% Calculate the work-time bounds of algorithms in 
% \figreftwo{mm-n3}{mm-n2} respectively and show the problem to 
% target in this paper
 
We can calculate the critical-path length (time), space and 
serial cache bounds 
of these two algorithms by the recurrences of 
\eqrefTwo{mm-n3-time}{madd-cache-stop}. 
% Some brief explanations on the recurrences!
\proc{CO2} algorithm in \figref{mm-n2} uses no more 
space than input and output matrices, thus its space
bound is simply $O(n^2)$.
\aeqref{mm-n3-space} says that \proc{CO3} algorithm allocates
an $n$-by-$n$ temporary matrix $D$ before spawning subtasks at
each recursion level, thus has an additional
overhead of matrix addition to merge results as indicated
by \eqref{mm-n3-time}.
\aeqref{mm-n2-time} shows that \proc{CO2} algorithm does not 
have this overhead. 
Because of the temporary matrix, \proc{CO3} algorithm can run
all eight ($8$) subtasks derived at each recursion
level completely in parallel, hence only one subtask sits on
the critical path as indicated
in \eqref{mm-n3-time}, while \proc{CO2} has to separate eight
subtasks into two parallel steps, i.e. two subtasks sitting
on its critical path as in \eqref{mm-n2-time}.
\aeqref{mm-n2-cache-stop} says that as soon
as the input and output matrices of \proc{CO2} 
are smaller than some constant fraction of cache size $M$, there 
will be no more cache misses than a serial scan.
\aeqref{mm-n3-cache-stop}, on the contrary, indicates that
\proc{CO3} algorithm keeps allocating new memory for new subtasks
, which is always assumed to incur cold cache misses.
\aeqreftwo{madd-time}{madd-cache} show that the matrix addition
is also done by a $2$-way divide-and-conquer and there are
four ($4$) subtasks derived at each recursion level. There is
no data dependency among subtasks thus only one sitting on
its critical path.
Solving the recurrences, we can see that \proc{CO3} 
algorithm in \figref{mm-n3} has an optimal 
$O(\log n)$ time bound (critical-path length) 
\footnote{The overhead of matrix
addition at each recursion level is just $O(1)$, i.e. 
$T_{\infty, \proc{madd}}(n) = O(1)$.} 
if counting only data dependency but a poor $O(n^3)$ space 
and $O(n^3/B)$ serial cache bounds; By contrast, \proc{CO2} 
algorithm 
has an optimal $O(n^2)$ space and $O(n^3/(B\sqrt{M}) + n^2/B)$
serial cache bound, but a sub-optimal $O(n)$ time bound 
(critical-path length).
In the literature, it is known as space-time tradeoff.
%
%\begin{figure}[!ht]
\begin{align}
T_{\infty, \proc{CO3}} (n) &= T_{\infty, \proc{CO3}} (n/2) + T_{\infty, \proc{madd}} (n) \label{eq:mm-n3-time} \\
S_{\proc{CO3}} (n) & = 8 S_{\proc{CO3}} (n/2) + n^2 \label{eq:mm-n3-space} \\
T_{\infty, \proc{madd}} (n) &= T_{\infty, \proc{madd}} (n/2) \label{eq:madd-time}\\
T_{\infty, \proc{CO2}} (n) &= 2 T_{\infty, \proc{CO2}} (n/2) \label{eq:mm-n2-time}\\ 
Q_{1, \proc{CO2}} (n) &= 8 Q_{1, \proc{CO2}} (n/2)\label{eq:mm-n2-cache} \\
Q_{1, \proc{CO2}} (n) &= O(n^2/B) \quad \quad \text{if $n \leq \epsilon M$} \label{eq:mm-n2-cache-stop} \\
Q_{1, \proc{CO3}} (n) &= 8 Q_{1, \proc{CO3}} (n/2) + Q_{1, \proc{madd}} (n)\label{eq:mm-n3-cache} \\
Q_{1, \proc{CO3}} (1) &= O(1) \label{eq:mm-n3-cache-stop} \\
Q_{1, \proc{madd}} (n) &= 4 Q_{1, \proc{madd}} (n/2) \label{eq:madd-cache} \\
Q_{1, \proc{madd}} (n) &= O(n^2/B) \quad \quad \text{if $n \leq \epsilon M$} \label{eq:madd-cache-stop} 
\end{align}
%\end{figure}
% 
An interesting research question is if it is possible 
to achieve a sublinear time bound while still bounding 
space and cache complexities to be asymptotically optimal.

% List of contribution bullets
\paragrf{Our Contributions (\figref{contri-table}): } 
\begin{itemize}
    \item We look into runtime system of PO algorithms and
        prove that if a runtime follows the ``busy-leaves''
        property \cite{BlumofeJoKu95}, there will be no more 
        than $p$ subtasks of the same depth co-exist at 
        any time in a $p$ processor system.
        If a runtime memory allocator stands by
        the ``Last-In First-Out'' principle, memory blocks 
        on the same processor are then largely
        reused, thus avoiding most of \proc{CO3} algorithm's
        cache misses without sacrificing critical-path length.
        Moreover, we propose a novel ``lazy allocation'' 
        strategy such that
        a subtask allocates temporary space ``\emph{after}''
        it is spawned and ``\emph{after}'' it makes sure that
        it is running simultaneously with its siblings that
        target the same output region.
        By the strategy, we reduce the space and
        cache requirement of PO MM algorithms to be 
        asymptotically optimal while still keeping a sublinear
        critical-path length.
        In \secref{strassen}, we show how to extend the
        approach to Strassen-like fast algorithms.
    \item We show by experiments that our new PO MM algorithms 
        do have performance advantage over both classic 
        \proc{CO2} and \proc{CO3} algorithms in a fair
        comparison.
\end{itemize}

\paragrf{Organization: }
\secref{model} introduces the cost models and programming
model for algorithm 
design and analysis; \secref{mm} discusses the intuitions behind 
our new PO MM algorithms, as well as its step-by-step 
construction; \secref{strassen}
extends our approach to Strassen-like fast MM algorithm;
\secref{expr} experiment and compare our new algorithms with
classic counterparts in a fair fashion;
\secref{relWork} concludes the paper and discusses related works.

\secput{model}{Cost Models and Programming Model}

This section briefly introduces the theoretical models used in 
algorithm design and analysis.
% Briefly describe the theoretic models to calculate the bounds

\paragrf{Parallel Performance Model: }
We adopt the work-time model \cite{JaJa92}
(also known as work-span model \cite{CormenLeRi09})
to calculate time complexities.
The model views a parallel computation as a DAG. 
Each vertex stands for a piece of computation 
that has no parallel construct and each edge represents some 
control or data dependency.
For simplicity, we count each arithmetic operation such as
multiplication, addition, and comparison uniformly as an 
$O(1)$ operation.
The model calculates an algorithm's time bound 
(also known as critical-path length, span, depth, denoted by
$T_\infty$) 
by counting the number of arithmetic operations along 
critical path. 
\punt{% begin punt
As pointed out in \cite{TangYoKa15, DinhSiTa16}, any extra 
control dependency to data dependency in an algorithm is 
artificial dependency and can only hurt critical path length,
hence should be eliminated by techniques such as Nested Dataflow 
(ND) model \cite{DinhSiTa16}. 
}% end punt
Work bound ($T_1$) is then the sum of all arithmetic operations.
Time bound $T_\infty$ and work bound $T_1$ characterize 
the running time of parallel algorithm on infinite number 
and one processor(s), respectively.
This paper assumes a Randomized Work-Stealing (RWS) scheduler 
that has the ``busy-leaves'' property as specified in 
\cite{BlumofeJoKu95}. More discussions on the property and its
application can be found in \secref{sar-mm}.
\punt{% begin punt
The busy-leaves property 
says that from the time a task is spawned to the time it 
finishes, there is always
at least one subtask from the subcomputation rooted at it 
that is ready. In other words, no leaf task can stall or be 
preempted. Busy-leaves property holds in Cilk
system \cite{IntelCilkPlus10, Leiserson10} and will be 
extended by our \thmref{remote-blocking}.
}% end punt
By an RWS scheduler, a better time bound, or equivalently
a shorter critical-path length, means more work available 
for randomized stealing
along critical path at runtime, hence a better ``dynamic
load-balance''.
We call a parallel algorithm \defn{work-efficient} if its total 
work $T_1$ matches asymptotically the time bound of best 
serial algorithm of the same problem.
Analogously, we have the notions of \defn{space-efficient} and 
\defn{cache-efficient}. 

\paragrf{Memory Model: }
We calculate only an algorithm's serial cache 
bound in the ideal cache model \cite{FrigoLePr12} since
corresponding parallel cache bound under RWS scheduler 
is bounded by \eqref{parallel-cache-bound} 
\cite{AcarBlBl00, SpoonhowerBlGi09}.  
\punt{% begin punt
\begin{align}
Q_p &= Q_1 + O(p T_\infty M/B) \label{eq:parallel-cache-bound}
\end{align}
}% end punt
% according to the ``drifted node / deviation'' method by Acar et 
% al 
Therefore, in the rest of paper, the term ``cache bound 
(complexity)'' stands for ``serial cache bound (complexity)''
unless otherwise specified.

The ideal cache model has an upper level cache of size $M$ and a
lower level memory of infinite size. 
Data exchange between the upper 
and lower level is coordinated by an omniscient (offline optimal) 
cache replacement algorithm in cache line of size $B$. It also 
assumes a tall cache, i.e. $M = \Omega(B^{2})$.
To accommodate parallel execution, 
we further assume that the lower level memory follows CREW 
(Concurrent Read Exclusive Write) convention.  Every concurrent 
reads from the same memory location can be accomplished in 
$O(1)$ time, while $n$ concurrent writes to the same memory cell 
have to be serialized by some order and take $O(n)$ total time 
to complete. That is to say, no matter these $n$ concurrent writes
are coordinated by user's atomic operation such as 
Compare-And-Swap (CAS) or by system's synchronization
facility such as ``\CilkSync'' in Cilk system, we always count 
their overall overhead by $O(n)$.
By \eqref{parallel-cache-bound}, $Q_p - Q_1 = O(p T_\infty M/B)$, 
we can see that the extra cache misses in a parallel execution 
to its serial execution is proportional to $T_\infty$, the 
critical-path length.
Hence, a shorter critical-path length means less parallel 
cache misses.

\paragrf{Programming Model: }
We use the notation ``$\id{a} \parallel \id{b}$'' to indicate 
that no subtasks of
\id{b} depend on any subtasks of \id{a}, i.e. \emph{no} 
dependency, while ``$\id{a} \serial \id{b}$'' says that all 
subtasks of \id{b} depend on all subtasks of \id{a}, i.e. a 
\emph{full} dependency. 

\secput{mm}{Space-Time Adaptive and Reductive (STAR) MM Algorithm}
% Organization of the section
\paragrf{Organization: } % 
\secref{tar-mm} parallelizes all multiplications of 
\proc{CO2} algorithm in \figref{mm-n2} without 
using much space; Based on \proc{CO3} 
algorithm in \figref{mm-n3}, \secref{sar-mm} 
reduces space requirement by exploiting the ``busy-leaves''
property \cite{BlumofeJoKu95}, bounds cache complexity to
be asymptotically optimal
by requiring a ``Last-In First-Out'' memory allocator, and
further improves space and cache requirements by ``lazy 
allocation''; \secref{star-mm} bounds the space complexity
to be asymptotically optimal by one level of indirection of 
TAR and SAR based on processor count $p$.

\begin{figure}
\begin{subfigure}[b]{0.5 \textwidth}
%\scriptsize
\begin{codebox}
\Procname{\proc{TAR-MM}(\id{C}, \id{A}, \id{B})}
\li \Comment $\id{C} \assign \id{A} \times \id{B}$
\li \If $(\func{sizeof}(\id{C}) \leq \const{base\_size})$
\li     \Then \Comment Request space from the program-managed memory pool
\li           $\id{D} \assign \proc{get-storage}(\func{sizeof}(C))$ \label{li:get-storage}
\li           \proc{base-kernel}(\id{D}, \id{A}, \id{B})
%\li           \While (there is concurrent writes on \id{C}); \label{li:tar-mm-block}
\li           \Comment Write the intermediate results in \id{D} to \id{C} atomically
\li           \proc{atomic-madd}(\id{C}, \id{D}) \label{li:tar-mm-atomic-write}
\li           \Comment Return storage to the memory pool
\li           \func{free}(\id{D})
\li           \Return
        \End
\li \Comment Run all $8$ sub-MMs concurrently
\li $\proc{TAR-MM}(\id{C}_{00}, \id{A}_{00}, \id{B}_{00}) \parallel \proc{TAR-MM}(\id{C}_{01}, \id{A}_{00}, \id{B}_{01})$
\li $\parallel \proc{TAR-MM}(\id{C}_{10}, \id{A}_{10}, \id{B}_{00}) \parallel \proc{TAR-MM}(\id{C}_{11}, \id{A}_{10}, \id{B}_{01})$
\li $\parallel \proc{TAR-MM}(\id{C}_{00}, \id{A}_{01}, \id{B}_{10}) \parallel \proc{TAR-MM}(\id{C}_{01}, \id{A}_{01}, \id{B}_{11})$
\li $\parallel \proc{TAR-MM}(\id{C}_{10}, \id{A}_{11}, \id{B}_{10}) \parallel \proc{TAR-MM}(\id{C}_{11}, \id{A}_{11}, \id{B}_{11})$
\li \Return
\end{codebox}
\vspace*{-1 \baselineskip}
\caption{\proc{TAR-MM} algorithm}
\label{fig:tar-mm}
\end{subfigure}
\vfil
%
%\begin{minipage}[b]{0.5 \textwidth}
\begin{subfigure}[b]{0.5 \textwidth}
%\scriptsize
\begin{codebox}
\Procname{\proc{hlp}(\id{Parent}, \id{A}, \id{B}, \id{d})}
\li \If $(\id{parent}.\func{trylock}())$ \label{li:hlp-get-storage-begin}
\li     \Then \Comment work right on parent's storage
\li         $\id{D} \assign \id{parent}$
\li     \Else
\li         \Comment request space for depth $d$ 
\li         $\id{D} \assign \proc{get-storage}(\func{sizeof}(n/2^{\id{d}}))$ \label{li:wrapper-sar-get-storage}
        \End \label{li:hlp-get-storage-end}
\li \If $(\func{sizeof}(n/2^{\id{d}}) \leq \const{base\_size})$
\li     \Then \proc{base-kernel}(\id{D}, \id{A}, \id{B})
\li     \Else
\li         \proc{SAR-MM}(\id{D}, \id{A}, \id{B}, \id{d})
    \End
\li \If $(\id{D} \neq \id{parent})$ \label{li:hlp-wb-begin}
\li     \Then \Comment Update \id{D} to \id{parent} atomically
\li         \proc{atomic-madd}(\id{parent}, \id{D}) \label{li:wrapper-sar-atomic-madd}
\li         \Comment Return storage to the memory pool
\li         \func{free}(\id{D})
\li     \Else
\li         $\id{parent}.\func{unlock}()$ \label{li:hlp-unlock}
        \End \label{li:hlp-wb-end}
\li \Return
\end{codebox}
\vspace*{-1 \baselineskip}
\caption{The helper function request temporary storage from the
program-managed memory pool \emph{iff} parent's storage is
occupied. If computation is on a local temporary
storage, the helper function will write back results to
parent by atomic addition.}
\label{fig:wrapper-sar-mm}
\end{subfigure}
\vfil
\begin{subfigure}[b]{0.5 \textwidth}
%\scriptsize
\begin{codebox}
\Procname{\proc{SAR-MM}(\id{C}, \id{A}, \id{B}, \id{d})}
\li \Comment Computes SAR-MM at recursion level \id{d}
\li \Comment Run all $8$ sub-MMs concurrently
\li $\proc{hlp}(\id{C}_{00}, \id{A}_{00}, \id{B}_{00}, \id{d} + 1) \parallel \proc{hlp}(\id{C}_{01}, \id{A}_{00}, \id{B}_{01}, \id{d} + 1)$
\li $\parallel \proc{hlp}(\id{C}_{10}, \id{A}_{10}, \id{B}_{00}, \id{d} + 1) \parallel \proc{hlp}(\id{C}_{11}, \id{A}_{10}, \id{B}_{01}, \id{d} + 1)$
\li $\parallel \proc{hlp}(\id{C}_{00}, \id{A}_{01}, \id{B}_{10}, \id{d} + 1) \parallel \proc{hlp}(\id{C}_{01}, \id{A}_{01}, \id{B}_{11}, \id{d} + 1)$
\li $\parallel \proc{hlp}(\id{C}_{10}, \id{A}_{11}, \id{B}_{10}, \id{d} + 1) \parallel \proc{hlp}(\id{C}_{11}, \id{A}_{11}, \id{B}_{11}, \id{d} + 1)$
\li \Return
\end{codebox}
\vspace*{-1 \baselineskip}
\caption{\proc{SAR-MM} algorithm}
\label{fig:sar-mm}
\end{subfigure}
%\end{minipage}
\caption{\proc{TAR-MM} and \proc{SAR-MM} algorithms.
}
\label{fig:mm-dac}
\end{figure}

\subsecput{tar-mm}{Time Adaptive and Reductive (TAR) MM Algorithm} 
A close look at \proc{CO2} algorithm in \figref{mm-n2}
reveals several aspects for further improvement. 
\begin{enumerate}
\item It imposes more control dependency than necessary data 
    dependency to keep the algorithm correct.
    That is, the all-to-all synchronization on \liref{mm-n2-sync} 
    of \figref{mm-n2} serializes eight sub-MM's of each recursion 
    levels to two parallel steps.
    For an instance, by this synchronization, the computation of 
    \proc{CO2}$(\id{C}_{00}, \id{A}_{01}, \id{B}_{10})$ 
    not only waits on 
    \proc{CO2}$(\id{C}_{00}, \id{A}_{00}, \id{B}_{00})$
    , which writes to the same $\id{C}_{00}$ quadrant, but also 
    has to wait for the computation working on completely disjoint
    quadrants, i.e. $\id{C}_{01}$, $\id{C}_{10}$, and 
    $\id{C}_{11}$.

\item The synchronization on \liref{mm-n2-sync} of \figref{mm-n2} 
    essentially serializes $n$ multiplications updating
    the same cell of output matrix to $n$ parallel steps. 
    However, multiplications by 
    themselves do not have any data dependency among each other 
    and should be parallelized. The serialization makes 
    sense only on later writing back by additions.
\end{enumerate}

Based on the above observations, we devise a Time Adaptive
and Reductive (TAR) algorithm to remove unnecessary 
control dependency from critical path. \afigref{tar-mm} shows the 
pseudo-code.
\proc{TAR-MM} algorithm spawns all eight sub-MM's at each level of 
recursion simultaneously to maximize parallelism and 
serialize only concurrent writes to the same output region.
Though we employ atomic operation such as 
\func{Compare-And-Swap} (\proc{atomic-madd} on 
\liref{tar-mm-atomic-write}) in our pseudo-code for the 
serialization, we feel that it's possible to design a 
dataflow operator like the ``$\fire$'' operator in Nested Dataflow 
model \cite{DinhSiTa16} or ``\CilkSync'' in Cilk system for 
the purpose. 
When output region \id{C} is of \const{base\_size}, the algorithm
requests temporary storage from memory allocator
(\liref{get-storage}) before base-case computation. 

\paragrf{Memory Allocator: }
Memory allocator is a key component to guarantee the reuse of 
data blocks across requests on each processor. 
The reason that \proc{CO3} algorithm incurs $O(n^3/B)$ cache
misses, which is asymptotically more than that of \proc{CO2},
i.e. $O(n^3/(B\sqrt{M}))$, is because it repeatedly requests
space before spawning subtasks at each depth, and
people assume that allocation of space will always incur cold
cache misses to fill it.
If a memory allocator can guarantee the reuse of memory block,
the above assumption is no longer true, thus we can save lots
of un-necessary cold misses. That is to say, though \proc{CO3}
algorithm still requests space at each depth, if
the space is reused from prior requests, the fill-out of 
space by new data will not incur any cold misses because a smart
cache can hold the reused memory block in cache by the omniscient
cache replacement policy \cite{FrigoLePr12}.
More precisely,
all requests on the same processor should be served in a Last-In, 
First-Out (LIFO) fashion like a stack so that a smart cache can 
hold most recently used data blocks in cache 
to avoid thrashing. More precisely, if a user's program 
requests the same sized memory block on the same processor, 
allocator should guarantee to return exactly the same memory
block for reuse.
\punt{% begin punt
Experimental results in \secref{expr} show that our 
program-managed 
memory pool easily outperforms Intel TBB's Thread-Local Storage 
(TLS) \cite{Reinders07}. 
}% end punt

We have an observation that each processor can work on only one 
task (one nested function call in an invocation tree)
at a time. We further assume that a task 
computing a base case can not block or be preempted (according to 
the ``busy-leaves'' property \cite{BlumofeJoKu95}), which is true 
by Cilk's RWS scheduler \cite{BlumofeLe99}. 
Then we have \thmref{tar-mm}.

\begin{theorem}
    There is a \proc{TAR-MM} algorithm that computes general 
    square MM of dimension $n$ on a semiring in $O(n)$ time, 
    $O(n^2 + p b^2)$ space, and optimal $O(n^3/(B\sqrt{M}) + 
    n^2/B)$ 
    cache misses, where $b$ denotes the dimension of base case.
    If assuming $b$ is some small constant, the space bound
    reduces to $O(n^2 + p)$.
\label{thm:tar-mm}
\end{theorem}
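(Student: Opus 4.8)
The plan is to establish the four bounds—work, span ($T_\infty$), space, and serial cache—separately, exploiting the two structural features of \proc{TAR-MM} (\figref{tar-mm}): it requests scratch storage \id{D} \emph{only} at the leaves (\liref{get-storage}), and it resolves the write conflict between the two siblings targeting the same output quadrant not through a control-dependency barrier (as \proc{CO2} does) but through the serializing \proc{atomic-madd} on \liref{tar-mm-atomic-write}. The work bound $T_1 = O(n^3)$ is immediate: the recursion is the standard $8$-way MM tree, and its $(n/b)^3$ leaves each cost $O(b^3)$ in the kernel plus $O(b^2)$ in the write-back, summing to $O(n^3)$ (the CREW convention of \secref{model} charges $m$ concurrent writes only $O(m)$ total, so contention adds no asymptotic work).

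For the span I would first note that, because all eight recursive calls are spawned with ``$\parallel$'', the spawn tree alone has depth $\log(n/b)=O(\log n)$ and contributes only $O(\log n)$ to the critical path; the real content is the write serialization. I would then argue directly about the longest dependent chain ending at a fixed output cell $C[i][j]$. That cell is updated by the $n/b$ leaf tasks whose $k$-block index ranges over $[0,n/b)$, and under the CREW convention these $n/b$ updates to a single cell must be serialized, forming a chain of length $O(n/b)$, while updates to distinct cells proceed independently. Each such update is preceded by a leaf kernel whose per-cell span is $O(b)$ (or $O(b^3)$ if the kernel runs serially), but these kernels overlap across the $n/b$ contributing leaves rather than accumulating on the chain. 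Combining, $T_\infty = O(\log n + b^3 + n/b) = O(n)$ for a small constant $b$. I expect this serialization argument to be the main obstacle, since it is not captured by any clean span recurrence and must instead be bounded by identifying the cell carrying the longest serialized write chain.

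For the space bound I would invoke the ``busy-leaves'' property \cite{BlumofeJoKu95}. Storage is requested (\liref{get-storage}) and released only inside a leaf task, and by busy-leaves a leaf, once started, runs to completion without blocking or being preempted; since each of the $p$ processors runs at most one task at any instant, at most $p$ leaves can simultaneously hold a block, each of size $O(b^2)$. Adding the $O(n^2)$ for the input and output matrices gives $S_p = O(n^2 + p b^2)$, which collapses to $O(n^2 + p)$ for constant $b$.

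Finally, for the serial cache bound I would analyze the depth-first serial execution and rely on the Last-In-First-Out allocator, assuming $b = O(\sqrt M)$ so a leaf's working set fits in cache. Because allocation occurs only at the leaves and the LIFO pool returns the same physical block for every same-sized request on a processor, the $O(b^2)$-sized \id{D} is reused and stays resident, so refilling it incurs no repeated cold misses beyond the $O(b^2/B)$ already charged to the leaf's scan. The cache recurrence therefore coincides with \proc{CO2}'s \eqref{mm-n2-cache}, namely $Q_1(n) = 8\,Q_1(n/2)$ bottoming out at $O(n^2/B)$ once the working set fits in cache, which solves to the optimal $O(n^3/(B\sqrt M) + n^2/B)$. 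The contrast with \proc{CO3} is exactly here: \proc{CO3} allocates a large, non-reused $\Theta(n^2)$ temporary at \emph{every} level, forcing the extra additive term that inflates its bound to $O(n^3/B)$, whereas \proc{TAR-MM} pays only for a reused leaf-sized block. Plugging this optimal $Q_1$ together with the $O(n)$ span into \eqref{parallel-cache-bound} then transfers optimality to the parallel setting.
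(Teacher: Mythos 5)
Your proposal is correct and follows essentially the same route as the paper's proof: the span is bounded by the $O(n/b)$ serialized write chain to a single output cell under the CREW convention, the space bound comes from busy-leaves limiting live leaf buffers to $p$ blocks of size $b^2$, and the cache bound reuses \proc{CO2}'s recurrence with a stop condition that adds the reused $b^2$ scratch block to the working set. Your write-up merely makes the paper's one-sentence span and space arguments more explicit; no substantive difference or gap.
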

\begin{proof} %
\paragrf{Time bound: }
The critical-path length of $O(n)$ follows from the fact that
the algorithm parallelizes all multiplications 
and serializes only concurrent writes to the same memory 
location. 
\punt{% begin punt
Recall from the paragraph of ``Memory Model'' in \secref{model} 
that $O(n)$ writes to the same memory location are always 
serialized in $O(n)$ time.
}% end punt

\paragrf{Space bound: }
Space bound follows from the facts that each processor can 
work on only one base case at a time and base-case computation 
can not block or be preempted. The temporary space for base-case 
computation on each processor thus are reused across different
invocations.

\paragrf{Cache bound: }
The recurrences for cache bound are almost 
identical to that of \proc{CO2} except that the stop 
condition is changed to \eqref{tar-mm-cache-stop} as follows.
\begin{align}
    Q_{1, \proc{TAR-MM}} (n) &= 8 Q_{1, \proc{TAR-MM}} (n/2) & & \label{eq:tar-mm-cache} \\
    Q_{1, \proc{TAR-MM}} (n) &= O(n^2/B) & & \text{if $3 n^2 + b^2 \leq \epsilon M$} \label{eq:tar-mm-cache-stop} 
\end{align}
\aeqref{tar-mm-cache} recursively calculates a dimension-$n$
(an $n$-by-$n$ matrix multiplies another $n$-by-$n$ matrix)
\proc{TAR-MM}'s cache misses to eight ($8$) dimension-$(n/2)$
\proc{TAR-MM}'s cache misses.
When the space requirements of input, output and temporary storage 
(a $b$-by-$b$ storage allocated on \liref{get-storage} of 
\figref{tar-mm})
of a dimension-$n$ \proc{TAR-MM} are less than or equal some 
constant factor of $M$, any further recursion will not incur
more cache misses than a serial scan as addressed by 
\eqref{tar-mm-cache-stop}. Solving the recurrences will yield
the bound.
\punt{% begin punt
Note that the ideal cache model \cite{FrigoLePr12} assumes an 
omniscient cache 
replacement policy that will always hold the $b$-by-$b$ 
storage in cache to avoid thrashing.
The theorem then follows.
}% end punt
\end{proof}

\punt{% begin punt -- SODA doesn't care much about performance and non-asymptotic optimization 
\paragrf{Optimization: } There is one possible optimization for the \proc{TAR-MM} to 
reduce the number of updates to the global output matrix, which requires atomic operations
, and reduce the conflicts of concurrent writes.
Each processor will have $p$, instead of $1$, copies of base case. 
When comes to a base case computation, the processor will first
check to see if the base case is already held in one of the $p$ local copies or if there
is any free local storage. If so, it computes the multiply-add operations of the
base case on local copies; If all $p$ local copies are occupied and the 
computing base case is not one of them, it will flush all $p$ copies of
intermediate results to the global output matrix \id{C} before it computes the base
case locally.
Since there are at most $p$ possible conflicts of concurrent writes from different
processors, the processor can simply switch to the update of other local copies
if any conflict is detected. 
}% end punt
\punt{% begin punt
Experimental results in \secref{expr} show that the 
\proc{TAR-MM} algorithm
consistently outperforms \proc{CO2} algorithm 
(\figref{mm-n2}) by
$5-10\%$ when the problem size is reasonably large.
}% end punt

\subsecput{sar-mm}{Space Adaptive and Reductive (SAR) MM algorithm}
A close look at \proc{CO3} algorithm (\figref{mm-n3})
shows that
it is designed for system with infinite or sufficient number 
of processors (proportional to algorithm's parallelism of 
$T_1/T_\infty$). 
At each level of recursion, regardless of availability of 
idle processors, the algorithm always allocates a temporary 
matrix \id{D} of the same size as output matrix \id{C}
(\liref{mm-n3-alloc-D} in \figref{mm-n3}). 
By recursion, it allocates $n^3 - n^2$ total temporary space 
on a $p$-processor system
, where $p \ll T_1/T_\infty = O(n^3 / \log n)$ usually holds 
in reality.

To reduce space requirement at no cost of parallelism,
we have one observation and one algorithmic trick as follows.
\punt{% begin punt
The observation is that there can be no more than $p$ computing
tasks, i.e. nested function calls, of the same depth
executing or blocking at any time by the ``busy-leaves'' property 
as addressed by \thmref{general-busy-leaves}.
The algorithmic trick is instead of allocating temporary space 
before spawning recursive function calls, we allocate space in 
a lazy fashion (\figref{wrapper-sar-mm}).
}% end punt
\punt{% begin punt
We have an observation that each processor at one time can work 
on only one computing task, i.e. one nested function call. 
}% end punt
\paragrf{Generalization of ``busy-leaves'' property: }
If we view the 
execution of a recursive algorithm as a depth-first traversal 
of its invocation tree, each node of which stands for a 
computing task (nested function call),
we define the \defn{depth} of a node to be the number of nodes 
on the path from root of tree to itself. 
The RWS scheduler specified in Blumofe and Leiserson's paper 
\cite{BlumofeLe99} has an important \defn{busy-leaves}
property as follows. 
The busy-leaves property says that from the time a task is 
spawned to the time it finishes, there is always
at least one subtask from the subcomputation rooted at it
that is ready. In other words, no leaf task can stall or be
preempted.  The busy-leaves property holds in Cilk runtime
system \cite{IntelCilkPlus10, Leiserson10} and we further 
extend it by \thmref{general-busy-leaves}.

\begin{theorem}
If a runtime scheduler stands by the \defn{busy-leaves} property, 
there can be no more than $p$ tasks of the same depth executing 
or blocking at any time in a $p$-processor system.
\label{thm:general-busy-leaves}
\end{theorem}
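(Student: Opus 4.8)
The plan is to set up an injection from the depth-$d$ tasks that are active at a fixed instant into the set of leaves that are, at that instant, occupying processors, and then to note that this latter set has size at most $p$. So I would fix an arbitrary time $t$ during the execution and an arbitrary depth $d$, and let $S$ be the collection of tasks at depth $d$ that are executing or blocking at time $t$. Each such task has been spawned and has not yet finished at time $t$, which is exactly the window in which the busy-leaves hypothesis is in force for it, so the property applies uniformly to every member of $S$.

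The first substantive step is to attach to each $T \in S$ a \emph{witness} leaf $\ell(T)$. Since $T$ is spawned-but-unfinished at time $t$, the busy-leaves property guarantees a ready subtask somewhere in the subcomputation rooted at $T$, and since no leaf may stall or be preempted, this forces some leaf $\ell(T)$ in the subtree of $T$ to be occupying a processor at time $t$ (when $T$ is itself a running base case we may take $\ell(T)=T$, and when $T$ is blocking at a sync the witness lies strictly below it). The second step is to check that $T \mapsto \ell(T)$ is injective, which rests on the purely tree-structural fact that two distinct nodes at the same depth can never stand in an ancestor--descendant relation; hence the subtrees rooted at two distinct members of $S$ are disjoint, and their witness leaves are therefore distinct. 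The final step is the counting: every witness leaf is by construction a leaf being worked on by some processor, and a processor runs at most one task at a time, so distinct witness leaves reside on distinct processors. The image of the injection thus has at most $p$ elements, and we conclude $|S| \le p$. I would remark that this is the same disjoint-subtree bookkeeping that yields the classical ``at most $p$ leaves'' bound, now applied level by level rather than to the tree as a whole.

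The step I expect to demand the most care is the attachment of the witness leaf, specifically the verification that the ``ready subtask'' promised by the busy-leaves property is genuinely one of the at-most-$p$ processor-occupied leaves at the exact instant $t$, and not merely a task sitting ready in a scheduler queue. This obliges me to pin down the operational meaning of \emph{executing}, \emph{blocking}, and \emph{ready}, and to confirm that the witness leaf's residence on a processor holds at the snapshot time $t$ at which we are counting. Once that identification is secured, the injectivity argument and the processor count are routine, and the bound follows.
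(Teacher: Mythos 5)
Your proof is correct and takes essentially the same route as the paper's: both arguments reduce the count of active depth-$d$ tasks to the count of busy leaves occupying processors (at most $p$), using the busy-leaves property to locate a working leaf inside each task's subtree. The paper states this as a terse induction from the leaves upward, whereas you make the witness-leaf injection and the disjoint-subtree argument explicit, which is if anything a more careful rendering of the same idea.
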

\begin{proof}
We prove by induction on the depth of tasks.
Since no leaf task can stall or be preempted, there can be no
more than $p$ leaf tasks at any time in a $p$-processor system.
Since each leaf task can have only one parent task, it's obvious
that their parent tasks can be no more than $p$ either. 
Recursively, the argument holds for arbitrary depth \id{d}.
\punt{% begin punt
    % this old proof can be substituted by new one
There is only one depth-$0$ task, i.e. root task, so the 
conclusion vacuously holds. Assuming that the conclusion holds 
up to 
depth \id{d}, we prove that it will hold for depth $\id{d}+1$. 
When a depth-\id{d} task spawns a depth-$(\id{d}+1)$ child, if 
the runtime scheduler chooses to run the child first, the 
depth-\id{d} parent will be pushed into the bottom end
of local deque. It must be resumed to execute before it can spawn 
another depth-$(\id{d}+1)$ child. If it gets executed on the 
same processor, it must be that the previous depth-$(\id{d}+1)$ 
child has finished according to busy-leaves property. 
In this case, any new depth-$(\id{d}+1)$ child from the same 
parent runs strictly after their older depth-$(\id{d}+1)$ 
siblings. If the depth-\id{d} parent is stolen and executed on 
a different processor before its local depth-$(\id{d}+1)$ child
finishes, the total number of depth-$(\id{d}+1)$ tasks
is then bounded by the total number of processors, i.e. $p$. 
Hence, the conclusion holds in both cases.
If the scheduler chooses to run the parent first at a spawn, 
it will push the child into local deque. Since the 
depth-$(\id{d}+1)$ child is in deque, neither  
executing nor blocking, the conclusion trivially holds. By 
blocking or being preempted, we mean that a computing task is 
put aside somewhere in memory after it starts executing but 
before it finishes. If the child is then stolen and executed
on another processor, again the total number of non-blocking
depth-$(\id{d}+1)$ child is bounded by $p$.
This finalizes the proof for depth $\id{d}+1$.
}% end punt
\end{proof}

We verified by experiments that \thmref{general-busy-leaves} does
hold in Intel Cilk Plus runtime \cite{IntelCilkPlus10}.
By \thmref{general-busy-leaves}, it is sufficient to
allocate at most $p$ copies of sub-matrix of any
depth for reuse among all subtasks. More precisely, 
$\min\{p, 4^d\}$ copies of sub-matrices of any depth 
$d$ are sufficient, where the term $4^d$ indicates that at each 
depth, at most four out of eight sub-MMs will require 
temporary space and another four will work right on its 
parent's space. 
So, for any depth $d$, the memory allocator needs to hold 
at most $\min\{p, 4^d\}$ blocks of size $n/2^d$-by-$n/2^d$
for reuse across requests. 

\paragrf{Final SAR algorithm: }
The pseudo-code of this new algorithm, which we call 
\proc{SAR-MM}, is shown in \figref{sar-mm}, with a helper 
function in \figref{wrapper-sar-mm}.
%
% Explains the pseudo-code of algorithm ...

\paragrf{Lazy Allocation: }
To minimize space requirement, or in other words maximize
space reuse, we use an algorithmic trick
that allocates temporary space in a lazy fashion. That is, a 
sub-MM will request for temporary space if and only if it 
runs simultaneously on a different processor from
the sub-MM updating the same output region. 
In \figref{sar-mm}, all top-half and bottom-half sub-MMs
updating the same output regions are spawned simultaneously.
In \figref{wrapper-sar-mm}, \liref{hlp-get-storage-begin}
show that the top-half and corresponding bottom-half will 
compete on a mutex lock
to determine who will reuse parent's space and who will request
temporary space. If the
top-half and bottom-half are executed one-after-another either
on the same processor or on different processor, they will
both reuse parent's space.
If some sub-MM does request temporary space for local computation 
, it has to write its results back to parent 
atomically on \liref{wrapper-sar-atomic-madd}. 
Though we utilize atomic operations such as mutex lock to 
coordinate between every pairs of top-half and bottom-half,
we feel that it is possible and will be beneficial to have a 
system's facility like a dataflow ``$\fire$'' operator 
\cite{DinhSiTa16} for the purpose.
Besides the way shown in \figref{wrapper-sar-mm}, 
an alternative way to coordinate space reuse is to always let
the bottom-half reuse parent's space and top-half check the
status (``running'' or ``finished'') of corresponding bottom-half 
when it is scheduled to run before it decides if it should 
request temporary space or just reuse bottom-half's space.
We have to clarify that \thmref{general-busy-leaves} always
holds with or without the algorithmic trick of lazy allocation.
The trick just further maximizes the space reuse, thus reduces 
cache misses.

\begin{theorem}
    There is a \proc{SAR-MM} algorithm that computes general
    MM of dimension $n$ on a semiring in optimal $O(\log n)$ time,
    $O(p^{1/3} n^2)$ space, and optimal $O(n^3/B\sqrt{M} + 
    n^2/B)$ cache bounds, assuming $p = o(n)$.
\label{thm:sar-mm}
\end{theorem}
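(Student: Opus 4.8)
The plan is to prove the three bounds separately; the span and the (serial) cache bound track the \proc{CO3} and \proc{CO2}/\proc{TAR-MM} analyses closely, while the real work is the $O(p^{1/3}n^2)$ space bound, where \thmref{general-busy-leaves} meets lazy allocation and LIFO reuse. For the span, \proc{SAR-MM} spawns all eight sub-MMs in parallel exactly as \proc{CO3} does, so the recurrence is $T_\infty(n) = T_\infty(n/2) + O(1)$. The only new per-level costs are the non-blocking $\func{trylock}$ on \liref{hlp-get-storage-begin}, where the loser drops through to temporary storage without stalling its siblings, and the $\proc{atomic-madd}$ on \liref{wrapper-sar-atomic-madd}. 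Since at most two sub-MMs ever target one output region---the lock winner writing in place and the loser adding back---each parent cell sees contention at most two under CREW, so the write-back is an independent-per-element addition of height $O(1)$. Summing over the $O(\log n)$ levels, down to a leaf and back up through the chained write-backs, gives $T_\infty = O(\log n)$.

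For the serial cache bound I would reduce to the \proc{CO2}/\proc{TAR-MM} recurrence. The essential point is that LIFO reuse defeats \proc{CO3}'s $O(n^3/B)$ blow-up: in a serial run the allocator hands back the same physical block for every same-sized request at a given depth, so under the ideal-cache policy that block stays resident and refilling it with new data costs no cold misses. The write-backs add only an $O((n/2)^2/B)$ term per level, giving $Q_1(n) = 8\,Q_1(n/2) + O(n^2/B)$ with base case $Q_1(n) = O(n^2/B)$ once input, output, and the single reused temporary fit in $\epsilon M$, i.e.\ at $n = \Theta(\sqrt{M})$. Both the $8\,Q_1(n/2)$ term and the summed additive terms evaluate to $O(n^3/(B\sqrt{M}))$, so $Q_1 = O(n^3/(B\sqrt{M}) + n^2/B)$, which is optimal.

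The core is the space bound. By \thmref{general-busy-leaves}, at most $p$ tasks of any fixed recursion depth are executing or blocking at once. Each MM spawns eight sub-MMs, so depth $d$ holds $8^d$ of them; for each output sub-region exactly one of its two contributing siblings requests temporary storage while the other reuses the parent, so $4\cdot 8^{d-1} = \Theta(8^d)$ are temp-needing, each asking for a block of dimension $n/2^d$. Lazy allocation makes a block live only while its task actually runs---it is taken on \liref{wrapper-sar-get-storage} after the lock is lost and released once the write-back completes---so together with LIFO reuse the number of \emph{distinct} live blocks at depth $d$ is $\min\{p,\Theta(8^d)\}$. Peak space is therefore
\begin{align}
S_{\proc{SAR-MM}}(n) &= O\!\Big(\sum_{d=1}^{\log_2 n} \min\{p,\Theta(8^d)\}\,(n/2^d)^2\Big). \nonumber
\end{align}
Splitting at the crossover $d^\ast = \log_8 p$: for $d \le d^\ast$ each term is $\Theta(2^d n^2)$ and the geometric sum is dominated by its top, $\Theta(2^{d^\ast} n^2) = \Theta(p^{1/3} n^2)$; for $d > d^\ast$ each term is $p\,n^2/4^d$, a geometric tail summing to $\Theta(p\,4^{-d^\ast} n^2) = \Theta(p^{1/3} n^2)$. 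Absorbing the $O(n^2)$ input/output storage yields $O(p^{1/3} n^2)$---intuitively, $\sim p$ blocks of dimension $n/p^{1/3}$ stacked at the crossover depth.

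The step I expect to be hardest is pinning down the per-depth live-block count rigorously, since \thmref{general-busy-leaves} bounds \emph{executing-or-blocking} tasks whereas a naive eager allocation would keep blocks alive for spawned-but-idle tasks sitting in deques as well. I therefore have to argue that lazy allocation confines each block's lifetime to exactly the executing window and that LIFO reuse collapses a depth's many allocations down to the $\min\{p,\Theta(8^d)\}$ simultaneously live ones. I would also double-check the crossover arithmetic and record where $p = o(n)$ enters: it keeps $d^\ast = \log_8 p$ well below the recursion depth $\log_2 n$, so the crossover block dimension $n/p^{1/3}$ stays $\gg 1$, and it keeps the resulting $O(p^{1/3}n^2)$ space safely sub-cubic.
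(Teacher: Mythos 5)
Your proposal is correct and follows essentially the same route as the paper: the span is unchanged because all eight sub-MMs are spawned in parallel with only $O(1)$ trylock/atomic-add overhead per level, the space bound comes from \thmref{general-busy-leaves} plus lazy allocation capping the live temporaries at each depth $d$ by $\min\{p,\Theta(8^d)\}$ with the crossover at depth $\log_8 p$ yielding $O(p^{1/3}n^2)$, and the cache bound reuses the \proc{CO2}-style recurrence with a stop condition that accounts for the LIFO-reused blocks. Your single-sum-with-min formulation is just a repackaging of the paper's two-regime recurrences (\eqref{sar-mm-upper}--\eqref{sar-mm-depth}), and your closing caveat about block lifetimes is exactly the point the paper's lazy-allocation discussion is meant to address.
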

\begin{proof}
\paragrf{Time bound: }
Since the algorithm does not impose any synchronization among
the eight ($8$) sub-MMs derived at each depth, the
time bound is not affected, i.e. there are still 
$O(\log n)$ (atomic) additions sitting on critical path.
We have to clarify that the \proc{atomic-madd} on 
\liref{wrapper-sar-atomic-madd} in \figref{wrapper-sar-mm}
is functionally equivalent to a summation of the syncrhonization 
on \liref{co3-sync} of \proc{CO3} algorithm in \figref{mm-n3}
and the \func{madd} on \liref{co3-madd}. We just implement
it by atomic operation such as Compare-And-Swap (\func{CAS}).
Moreover, the \func{trylock} on \liref{hlp-get-storage-begin}
and \func{unlock} on \liref{hlp-unlock} of 
\figref{wrapper-sar-mm} is just an $O(1)$
operation because every pair of top-half and bottom-half do 
not wait on each other by the operation, but just check and 
signal the status to each other.

\paragrf{Space bound: }
The recurrences of temporary space requirements are:
\begin{align}
    & S (v) = 8 S (v/2) + 4 (v/2)^2 & \text{if $v > n/2^{\id{k}}$} \label{eq:sar-mm-upper}\\ 
    & S (v) = p S_1 (v) & \text{if $v \leq n/2^{\id{k}}$} \label{eq:sar-mm-bottom-p}\\
    & S_1 (v) = S_1 (v/2) + (v/2)^2 & \text{if $v \leq n/2^{\id{k}}$} \label{eq:sar-mm-bottom}\\
    & 4 \times (8^0 + 8^1 + \ldots + 8^{\id{k}}) = p & \label{eq:sar-mm-depth}
\end{align}
The term $S_1$ stands for the space requirement on each 
processor. \aeqref{sar-mm-upper} says that at upper levels 
of recursion, every four out of eight sub-MMs spawned at
each level may require extra space. 
\aeqref{sar-mm-depth} calculates the switching depth of \id{k}.
% \punt{ % begin punt
The term $(n/2)^2$ on the right-hand side of 
\eqref{sar-mm-bottom} 
indicates that only one copy of $n/2$-by-$n/2$ temporary 
matrix is needed for all MMs of size $n$-by-$n$ on any single 
processor.
% } % end punt
%
The recurrences solve to $\id{k} = (1/3) \log_2 (7/8 p + 1/2)$ 
and $S(n) = O(p^{1/3} n^2)$. Assuming $p^{1/3} \ll n$, i.e. 
$p = o(n)$, which 
usually holds in reality, the total space bound can be 
asymptotically less than that of the \proc{CO3} algorithm. 

\paragrf{Cache bound: }
The recurrences of cache complexity are:
\begin{align}
&Q_1 (n) = 8 Q_1 (n/2) + O(n^2/B) & \label{eq:sar-mm-cache}\\
&Q_1 (n) = O(n^2/B) \quad \text{if $(1 + 1/4 + \ldots) n^2 + 2 n^2 \leq \epsilon_4 M$} & \label{eq:sar-mm-cache-stop}
\end{align}
\aeqref{sar-mm-cache} is identical to that of the \proc{CO3}
algorithm, where the $O(n^2/B)$ term accounts for the possible 
overheads of merging the top-half and corresponding bottom-half's 
results by addition. If the top-half reuses
bottom-half's space by lazy allocation, the overhead will 
disappear. So \eqref{sar-mm-cache} accounts for the worst case.
The stop condition of \eqref{sar-mm-cache-stop} says that if the 
summation of memory footprint of all writes (output region) and 
reads (input region) of later recursions of dimension $n$ are 
less than or equal some constant fraction 
of cache size $M$, it will incur no more cache misses than a 
serial scan. In the equation, the term $(1 + 1/4 + \ldots) n^2$ 
stands for the summation of all subsequent writes' memory 
footprint assuming that same-sized memory requests reuse the
same memory blocks on the same processor, and $2 n^2$ is of 
all reads. The recurrences solve to the optimal 
$O(n^3/(B\sqrt{M}) + n^2/B)$ bound.
\end{proof}

\subsecput{star-mm}{Space-Time Adaptive and Reductive (STAR) 
MM algorithm}
The TAR algorithm removes multiplications from 
critical path without using much more space, while the SAR 
algorithm reduces space requirement without
increasing time bound. A natural idea
will be combining the two and yields
a near time-optimal and space-optimal \defn{STAR} MM algorithm.

The hybrid algorithm works as follows.
At upper levels of recursion, we employ the TAR algorithm and 
switch to the SAR after depth \id{k}, where 
\id{k} is a parameter to be determined later. 
More precisely, before 
depth \id{k}, our new algorithm spawns all eight sub-MMs at 
each depth in parallel and serializes 
concurrent writes to the same output region like the 
\proc{TAR-MM} algorithm;
After depth \id{k}, the algorithm keeps spawning all eight 
sub-MMs at each depth in parallel and reuse memory 
blocks like the \proc{SAR-MM}. 

\begin{theorem}
    There is a \proc{STAR-MM} algorithm that computes the 
    general MM of dimension $n$ on a semiring in 
    $O(\sqrt{p} \log n)$ time, optimal 
    $O(n^2)$ space, and optimal $O(n^3/(B \sqrt{M}) + n^2/B)$ 
    cache bounds, assuming $p = o(n^2/\log^2 n)$.
\label{thm:star-mm}
\end{theorem}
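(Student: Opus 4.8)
The plan is to treat \proc{STAR-MM} as a two-phase composition of the \proc{TAR-MM} and \proc{SAR-MM} algorithms and to choose the switch depth $k$ so as to balance the time cost of \proc{TAR-MM}'s serialized writes against the space cost of \proc{SAR-MM}'s temporary allocations. Concretely I would set $k = \tfrac{1}{2}\log_2 p$, so that $2^k = \sqrt{p}$ and $4^k = p$, and treat each depth-$k$ sub-MM of dimension $n/2^k$ as a ``base case'' of the \proc{TAR} phase that is itself computed by \proc{SAR-MM}. The three bounds are then established separately, reusing \thmref{tar-mm}, \thmref{sar-mm}, and \thmref{general-busy-leaves}.

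For the time bound I would argue, as in \thmref{tar-mm}, that above depth $k$ all eight sub-MMs are spawned in parallel and only the concurrent writes to a shared output block are serialized. Since the number of sub-MMs targeting the same output block doubles at each \proc{TAR} level, each depth-$k$ output block of dimension $n/2^k$ receives $2^k = \sqrt{p}$ serialized atomic updates, and each such update waits on a \proc{SAR-MM} of critical-path length $O(\log(n/2^k)) = O(\log n)$ by \thmref{sar-mm}. Bounding the critical path by the chain of $\sqrt{p}$ serialized writes, each gated by its $O(\log n)$-depth \proc{SAR-MM}, yields $O(2^k \log n) = O(\sqrt{p}\,\log n)$, which dominates the $O(\log n)$ depth of the \proc{SAR} phase below the boundary.

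For the space bound I would invoke \thmref{general-busy-leaves}: at the switch depth $k$ at most $\min\{p,4^k\} = p$ sub-MMs requiring temporary storage co-exist, each of size $(n/2^k)^2 = n^2/p$, for a total of $O(n^2)$; below depth $k$ the per-depth temporary footprint shrinks geometrically, exactly as in the \proc{SAR-MM} space recurrences \eqrefTwo{sar-mm-upper}{sar-mm-bottom}, so the \proc{SAR} phase contributes $O(n^2)$ in aggregate, while the \proc{TAR} phase above depth $k$ reuses storage in Last-In-First-Out order and adds only lower-order terms. Together with the $O(n^2)$ input and output this gives total space $O(n^2)$. For the cache bound I would note that both phases obey the same recurrence $Q_1(n) = 8\,Q_1(n/2) + O(n^2/B)$ with the in-cache stopping condition of \eqref{sar-mm-cache-stop}, so the hybrid inherits the optimal $O(n^3/(B\sqrt{M}) + n^2/B)$ bound verbatim.

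The step I expect to be the main obstacle is the time analysis at the phase boundary: I must verify that serializing the $\sqrt{p}$ atomic writes per output block does not compound across the $k$ \proc{TAR} levels into an $\omega(\sqrt{p}\,\log n)$ chain, i.e. that the $O(k)$ spawning depth and the $O(\log n)$ \proc{SAR} depth are absorbed into rather than multiplied onto the serialized-write cost. The hypothesis $p = o(n^2/\log^2 n)$ is precisely what I would use to close the argument, since it forces $\sqrt{p}\,\log n = o(n)$ (so the time bound is genuinely sublinear and strictly improves on \proc{TAR-MM}) and keeps the \proc{SAR}-phase dimension $n/2^k = n/\sqrt{p} = \omega(\log n)$ large enough for \thmref{sar-mm} to apply.
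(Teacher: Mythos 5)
Your proposal is correct and follows essentially the same route as the paper: the same hybrid construction with switch depth $k=\tfrac{1}{2}\log_2 p$, the same recurrences (upper \proc{TAR} levels doubling the critical path to give the $2^k=\sqrt{p}$ factor, lower \proc{SAR} levels giving $p\cdot(n/2^k)^2=O(n^2)$ temporary space via the busy-leaves bound), and the same observation that the cache recurrence with the \proc{SAR} stop condition yields the optimal bound. The only cosmetic difference is that you unroll the time recurrence as a chain of $\sqrt{p}$ serialized writes each gated by an $O(\log n)$-depth \proc{SAR-MM}, whereas the paper solves $T_\infty(v)=2T_\infty(v/2)$ directly to $2^k\log_2(n/2^k)$; these are the same computation.
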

\begin{proof}
The time and space recurrences of the STAR MM algorithm are:
\begin{align}
    T_\infty (v) &= 2 T_\infty (v/2) & \text{if $v > n/2^\id{k}$}  \label{eq:star-mm-upper-time}\\
    T_\infty (v) &= T_\infty (v/2) + O(1) & \text{if $v \leq n/2^\id{k}$}  \label{eq:star-mm-lower-time}\\ 
    S (v) &= p S_1 (v) & \text{if $v \leq n/2^{\id{k}}$}  \label{eq:star-mm-space}\\
S_1 (v) &= S_1 (v/2) + (v/2)^2 & {}  \label{eq:star-mm-single-space}
\end{align}
\aeqreftwo{star-mm-upper-time}{star-mm-lower-time} indicate 
that concurrent writes to the same output region 
are serialized / parallelized before / after depth \id{k}
, respectively. Since there are at most two
sub-MMs at each depth updating the same output 
quadrant, the serialization overhead is $O(1)$ at each level. 
Since no temporary space is requested before 
depth \id{k}, \eqref{star-mm-space} counts 
only the space requirement after \id{k},
which has the same form as in the SAR algorithm.
The recurrences solve to $S (v) = (1/3) p (n/2^\id{k})^2$ , and 
$T_\infty (n) = 2^\id{k} \log_2 (n/2^\id{k})$. Making 
$\id{k} = (1/2) \log_2 p$, we have $S(v) = (1/3) n^2 = O(n^2)$ 
and $T_\infty (n) = \sqrt{p} \log_2 (n/\sqrt{p}) = O(\sqrt{p} 
\log n)$.

We can consider the cache bound as follows. Recall that from
the paragraph of ``Memory Model'' in \secref{model} we count
only serial cache bound in this paper.
If executing STAR algorithm on a single processor, i.e. $p = 1$,
it reduces to SAR; If we adjust the switching depth \id{k} on
the single processor, it will become some middle ground between
the TAR and SAR. In either case, its cache bound stays 
optimal with a constant in big-Oh between that of the TAR and SAR.
\end{proof}

From the proof of \thmref{star-mm}, we can see that the total 
extra space requirement of STAR algorithm is just a third 
of the output matrix size, i.e. $(1/3) n^2$, with an 
$O(\sqrt{p})$ factor increase in the time bound. 
Since the \proc{CO2} algorithm 
uses at least $3 n^2$ space to hold the input and output matrices, 
this extra temporary space is just minimum. 
If we assume that $p = o(n^2/\log^2 n)$, the time bound stays 
sublinear.

\subsecput{discussion}{Discussions}
\punt{% begin punt
Our \proc{TAR-MM} algorithm removes un-necessary data dependency
among multiplications without using much space, and \proc{SAR-MM}
further reduces serial cache requirement to be asymptotically 
optimal by exploring runtime scheduler's ``busy-leaves'' property
memory allocator's ``Last-In First-Out'' property, and our novel
``lazy allocation'' property.
Finally, the \proc{STAR-MM} algorithm combines the advantages
of TAR and SAR by using processor count $p$ to bound space
requirement to be asymptotically optimal at a little increase of
critical-path length.
}% end punt
The main difference of our \proc{SAR-MM} algorithm from 
\proc{CO3} is that \proc{SAR-MM} requires a memory allocator 
to guarantee
the reuse of the same sized memory blocks across different
requests on the same processor, plus a novel ``lazy allocation'' 
strategy.
The reuse of memory blocks across requests removes un-necessary
cold cache misses in \proc{CO3} algorithm.
The generalization of ``busy-leaves'' property bounds total
space.
The ``lazy allocation'' trick further reduces space and 
cache requirements.

Though our STAR algorithm takes processor count $p$ as a 
parameter to bound total space requirement, unlike 
classic processor-aware approach, we do not partition statically 
problem space to processor grid. Static partitioning strategy 
has several concerns as discussed in \secref{intro}. 
By contrast, STAR algorithm enjoys the full benefits of
\emph{dynamic load balance} as classic processor-oblivious 
approach, which is a key to a satisfactory speedup especially 
when there is no good static partitioning algorithm for a 
problem. 
% So we call our design strategy ``\defn{Processor-Adaptive}''.
\punt{% begin punt
In summary, we conclude that our \proc{STAR-MM} algorithm has 
a sublinear and near-optimal time bound if processor count is 
relatively small to problem dimension, while staying work-, 
space- and cache-efficient. 
}% end punt
By a sublinear time bound, i.e. a 
shorter critical-path length of computational DAG,
our STAR algorithm has at least two \textbf{\textit{advantages}} 
over classic cache-oblivious parallel counterpart 
(e.g. the \proc{CO2} algorithm in \figref{mm-n2} 
). 
Firstly, it means more work available for dynamic 
load balance along critical path; 
Secondly, it incurs asymptotically less parallel cache misses 
according to \eqref{parallel-cache-bound}.

\secput{strassen}{STAR algorithm for Strassen-like fast matrix 
multiplication algorithm}

\punt{% begin punt
% Why we need Strassen-like algorithm
The expected running time of a parallel algorithm on a 
$p$-processor system under the RWS scheduler
is $T_1/p + O(T_\infty)$ \cite{BlumofeLe99}.
All MM algorithms discussed in \secref{mm} have 
$T_1 (n) = O(n^3)$. If assuming $p = o(n^2)$,
$T_1 (n) / p \gg T_\infty (n)$ hold for those MM algorithms.
This inequality says that decreasing the $T_\infty$ alone won't 
help much on the expected running time. We need decreasing the
total work $T_1$ as well.
}% end punt

This section extends the STAR technique to Strassen-like fast
MM algorithms.
Given square matrices $A$, $B$, and $C$, the Strassen algorithm 
\cite{Strassen69} recursively divides each matrix into four 
equally sized quadrants as shown in \eqref{mm-dac}. It then 
computes each quadrant of $C$ as follows:
\begin{align*}
S_1 &= A_{00} \oplus A_{11} & S_2 &= A_{10} \oplus A_{11} & S_3 &= A_{00} \\
S_4 &= A_{11} & S_5 &= A_{00} \oplus A_{01} & S_6 &= A_{10} \ominus A_{00} \\
S_7 &= A_{01} \ominus A_{11} & & & & \\
T_1 &= B_{00} \oplus B_{11} & T_2 &= B_{00} & T_3 &= B_{01} \ominus B_{11} \\
T_4 &= B_{10} \ominus B_{00} & T_5 &= B_{11} & T_6 &= B_{00} \oplus B_{01} \\
T_7 &= B_{10} \oplus B_{11} & & & & 
\end{align*}
\begin{align*}
P_r &= S_r \otimes T_r, & 1 \leq r \leq 7 
\end{align*}
\begin{align*}
C_{00} &= P_1 \oplus P_4 \ominus P_5 \oplus P_7 & C_{01} &= P_3 \oplus P_5 \\
C_{10} &= P_2 \oplus P_4 & C_{11} &= P_1 \oplus P_3 \ominus P_2 \oplus P_6
\end{align*}
The $\ominus$ notation dnotes the inverse operation of $\oplus$. 
\punt{% begin punt
Since not all $\oplus$ operations of a semiring has an inverse,
Strassen algorithm doesn't work for a general semiring, which is
the reason that it does not apply to the DP computation
discussed in the paper.
}% end punt

\begin{lemma}
    A straightforward parallelization of Strassen MM algorithm 
    has an $O(\log n)$ time, $O(n^{\log_2 7})$ work, 
    $O(n^{\log_2 7})$ space, and an $O(n^{\log_2 7}/B)$ serial 
    cache bound.
\label{lem:par-strassen}
\end{lemma}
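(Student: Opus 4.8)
The plan is to treat this ``straightforward'' parallelization as the Strassen analogue of the \proc{CO3} algorithm in \figref{mm-n3}: at each recursion level all seven products $P_1,\ldots,P_7$ are spawned in parallel into freshly allocated temporary storage, the operand combinations $S_r,T_r$ are formed before the spawn and the output combinations $C_{ij}$ are assembled afterwards, each by a fully parallel matrix add/subtract. Under this reading every one of the four claimed bounds follows by writing down and solving a single divide-and-conquer recurrence, exactly in the style of \eqrefTwo{mm-n3-time}{madd-cache-stop}.

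First I would establish the time bound. Because the $P_r$ carry no mutual data dependency, they all lie off one another's critical path; and because each $S_r$, $T_r$, and each $C_{ij}$ combination is a matrix add/subtract of fully parallel element operations, hence of $O(1)$ depth (just as $T_{\infty,\proc{madd}}(n)=O(1)$ via \eqref{madd-time}), the only things on the critical path at a given level are one recursive product and a constant-depth combine. This gives $T_\infty(n)=T_\infty(n/2)+O(1)=O(\log n)$.

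The work, space, and cache bounds all collapse to the same master-theorem shape $f(n)=7f(n/2)+g(n)$ with $\log_2 7>2$, so each is leaf-dominated. For work, $g(n)=O(n^2)$ counts the constantly many $(n/2)$-square add/subtracts, giving $T_1(n)=O(n^{\log_2 7})$. For space, since all seven products run concurrently they each need private operand and result buffers, so $S(n)=7S(n/2)+O(n^2)=O(n^{\log_2 7})$; this is the price of full parallelism and mirrors the $O(n^3)$ space of \proc{CO3} in \eqref{mm-n3-space}. For the serial cache bound I would write $Q_1(n)=7Q_1(n/2)+O(n^2/B)$ with a constant-size base case $Q_1(b)=O(b^2/B)$, i.e.\ a $b$-by-$b$ block read and written with spatial locality, which unfolds to $O(n^{\log_2 7}/B)$.

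The step I expect to be the real obstacle is the cache bound, and specifically arguing why \emph{no} $M$-factor saving appears here, in contrast to the $M^{1/2\log_2 7-1}$ factor in the \proc{SAR-Strassen} row of \figref{contri-table}. As with \proc{CO3}, the straightforward scheme allocates new temporaries at every level, so the standard assumption that fresh allocation incurs cold misses (cf.\ the base case $Q_{1,\proc{CO3}}(1)=O(1)$ in \eqref{mm-n3-cache-stop}) blocks any temporal reuse: there is no ``$n\le\epsilon M$'' cutoff at which a whole subproblem stays resident in cache. Consequently the recurrence bottoms out only at the constant-size base block, the sum is dominated by its leaves, and the $1/B$ factor from spatial locality within a cache line is the only savings one can claim, yielding exactly $O(n^{\log_2 7}/B)$ and leaving the $M$-factor improvement to the reuse-guaranteeing allocator of \secref{sar-mm} that the later Strassen results exploit.
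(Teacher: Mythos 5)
Your proposal is correct and follows essentially the same route as the paper: the identical recurrences $T_\infty(n)=T_\infty(n/2)+O(1)$, $S(n)=7S(n/2)+O(n^2)$, and $Q_1(n)=7Q_1(n/2)+O(n^2/B)$ with a constant-size stop condition, and the same key observation that fresh allocation at every level forecloses any $n\le\epsilon M$ cutoff, so the cache recurrence is leaf-dominated and yields $O(n^{\log_2 7}/B)$. The only cosmetic difference is that the paper counts the per-level temporaries exactly as $17$ copies of $(n/2)$-by-$(n/2)$ matrices where you write $O(n^2)$, which changes nothing asymptotically.
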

% \punt{% begin punt
\begin{proof}
The time ($T_\infty$), space ($S$) and cache ($Q_1$) recurrences 
of a straightforward Strassen parallelization is as follows:
\begin{align*}
T_\infty (n) &= T_\infty (n/2) + O(1) & S (n) &= 7 S (n/2) + 17 (n/2)^2 \\
Q_1 (n) &= 7 Q_1 (n/2) + O(n^2/B) &  
\end{align*}
The time recurrence says that at each depth of 
dimension $n$, the Strassen algorithm spawns simultaneously 
seven ($7$) subtasks of dimension $n/2$ and only one of them 
sits on the critical path.
The $O(1)$ term in the time recurrence accounts for the overheads 
of constant number of matrix additions and subtractions (for 
computing $S$'s, $T$'s, and $C$'s) at each depth.
The space recurrence says that at each depth of
dimension $n$, a straightforward parallelization requires at most 
$17$ copies of $n/2$-by-$n/2$ temporary matrices to hold the 
intermediate results of all $P_r$'s and some of the $S_r$'s 
and $T_r$'s. If an $S_r$ or $T_r$ corresponds directly to 
a quadrant of input matrices \id{A} or \id{B} with no 
$\oplus$ or $\ominus$ operations such as $S_3$ or $T_2$, 
the algorithm doesn't allocate temporary space for it. 
Since all seven subtasks of multiplication of the same recursion 
level execute simultaneously, all temporary matrices have to be
ready before subtasks can be launched as the \proc{CO3} algorithm.
The cache recurrence is analogous with a similar stop 
condition to that of the \proc{CO3} algorithm 
(\eqref{mm-n3-cache-stop}), i.e. $Q_1 (1) = O(1)$, 
which indicates that the straightforward parallelization 
keeps allocating new space for new subtasks, which is 
always assumed to incur cold cache misses.
The recurrences solve to $T_\infty (n) = O(\log n)$,
$S(n) = O(n^{\log_2 7})$, and $Q_1 (n) = O(n^{\log_2 7}/B)$. 
That is to say, the amount of space requirement, as well as 
cache misses, is proportional to total work.
\end{proof}
% }% end punt

\punt{ % begin punt
\subsecput{tar-strassen}{TAR algorithm for Strassen}
The TAR algorithm is based on an algorithm that has no 
concurrent writes on the resulting matrix $C$. So we first
devise such an algorithm then uses the TAR technique to remove
artificial dependencies. 

In the original Strassen algorithm, each quadrant of $C$ comes
from the additions ($\oplus$) and subtractions ($\ominus$)
of several different $P_r$'s, i.e. concurrent writes. That's 
the reason that the algorithm requires
temporary matrices to hold all $P_r$'s. To remove the space for 
$P_r$'s without introducing conflicts of concurrent writes, we 
compute $P_r$'s by the following order and write them to 
corresponding quadrants of $C$ once they are done.
\begin{align*}
    P_1 \serial (P_2 \parallel P_5) \serial (P_3 \parallel P_4)
    \serial (P_6 \parallel P_7)
\end{align*}

The recurrences for time and space are then as follows.
\begin{align*}
T_\infty (n) &= 4 T_\infty (n/2) + 3 T_{\infty, \oplus} (n/2) & S (n) &= 4 (n/2)^2 + 2 S (n/2) \\
T_{\infty, \oplus} (n) &= T_{\infty, \oplus} (n/2) & & 
\end{align*}
The time recurrence says that the computing at each recursion 
level of dimension $n$ is divided into four parallel steps,
each of which can take at most $4 (n/2)^2 = n^2$ temporary space.
For an instance, the computing of $P_6$ and $P_7$ requires four
temporary matrices to hold the inputs calculated from $\oplus$ 
and $\ominus$ of quadrants of $A$ and $B$. 
The $\oplus$ operations of the results from the four parallel 
steps are also performed in a $2$-way divide-and-conquer fashion.
The parallel time and space then solve to $T_\infty (n) = O(n^2)$ 
and $S (n) = O(n^2)$.

As in the general MM (refer to \secref{tar-mm}),
the TAR technique can remove the artificial dependency introduced
by the ``$\serial$'' constructs (global synchronization) between 
parallel steps and parallelize all multiplications. However, the 
additions (as well as subtractions) to reduce the final results
can not be eliminated. So the above time recurrence counts the 
number of additions (subtractions) on the critical path, which 
again solves to $O(n^2)$. 

\begin{lemma}
    There is a TAR algorithm for Strassen MM that 
    has an $O(n^2)$ time, $O(n^{\log_2 7})$ work, 
    $O(n^2)$ space, and an $O(n^{\log_2 7}/B)$ serial 
    cache bound.
\label{lem:tar-strassen}
\end{lemma}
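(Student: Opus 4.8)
The plan is to build \proc{TAR-Strassen} in two stages, mirroring the development of \proc{TAR-MM} in \secref{tar-mm}. First I would eliminate the temporary matrices that the straightforward parallelization of Lemma~\ref{lem:par-strassen} keeps for the seven products, by scheduling the products so that concurrently-running products never touch the same quadrant of $C$. Using the explicit order $P_1 \serial (P_2 \parallel P_5) \serial (P_3 \parallel P_4) \serial (P_6 \parallel P_7)$ and checking it against the formulas for $C_{00}, C_{01}, C_{10}, C_{11}$, one verifies that inside each group the two products accumulate into disjoint quadrants (for instance $P_2$ into $\{C_{10}, C_{11}\}$ against $P_5$ into $\{C_{00}, C_{01}\}$). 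Each product may then be written straight into its target quadrant(s) by $\oplus$/$\ominus$ as it completes, so only the inputs $S_r, T_r$ of the at most two products in flight need to be materialized, rather than all seven products.

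Second, I would apply the TAR technique verbatim: the three ``$\serial$'' barriers above are pure control dependencies, so I delete them, launch the products as soon as their inputs are ready, and retain only the genuine requirement that writes to the same output region be serialized (implemented by an atomic multiply-add, exactly as in \proc{TAR-MM}). With the schedule fixed, the three cost recurrences fall out. For space, at most two product subproblems are live and each carries four $n/2$-by-$n/2$ input buffers, giving $S(n) = 2 S(n/2) + 4(n/2)^2 = O(n^2)$. The work is unchanged from Strassen, $T_1(n) = O(n^{\log_2 7})$. For the serial cache bound the recursion tree and the $Q_1(1) = O(1)$ base case are identical to the \proc{CO3}-style analysis behind Lemma~\ref{lem:par-strassen}, so $Q_1(n) = 7 Q_1(n/2) + O(n^2/B) = O(n^{\log_2 7}/B)$; TAR reuses no blocks, hence the cache bound does not improve.

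The delicate part is the time bound, and in particular reconciling ``all multiplications are parallelized'' with a span that is nonetheless $\Theta(n^2)$. The point I would make precise is that, with no buffers for the products, the products that accumulate into one quadrant of $C$ are forced to serialize by write-after-write dependencies under the CREW convention of \secref{model}; the busiest quadrants $C_{00}$ and $C_{11}$ each absorb four products, and this four-way serialization compounds over the $\log_2 n$ levels of recursion, so the chain of reductions reaching a fixed output cell has length $4^{\log_2 n} = n^2$. This yields $T_\infty(n) = 4 T_\infty(n/2) + 3 T_{\infty, \oplus}(n/2)$ with $T_{\infty, \oplus}(n) = T_{\infty, \oplus}(n/2)$ for the divide-and-conquer additions; since the homogeneous term $4 T_\infty(n/2)$ dominates, the recurrence solves to $T_\infty(n) = O(n^{\log_2 4}) = O(n^2)$.

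The main obstacle I anticipate is justifying that this four-way serialization is forced rather than a scheduling artifact, and simultaneously that it does not inflate space: products such as $P_1$ that feed two quadrants must still be handled without a dedicated product buffer (e.g. by propagating the first accumulation before the quadrant is overwritten), so I must argue carefully that the number of live product subproblems stays at two, keeping $S(n) = O(n^2)$, while the forced reductions keep the span at $O(n^2)$ and correctness is preserved across the $\oplus$/$\ominus$ cancellations.
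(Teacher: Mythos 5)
Your proposal follows essentially the same route as the paper's: the identical ordering $P_1 \serial (P_2 \parallel P_5) \serial (P_3 \parallel P_4) \serial (P_6 \parallel P_7)$ with each product accumulated directly into its (disjoint) target quadrants, the same recurrences $T_\infty(n) = 4\,T_\infty(n/2) + 3\,T_{\infty,\oplus}(n/2)$ and $S(n) = 2\,S(n/2) + 4(n/2)^2$, and the same observation that TAR deletes the inter-step barriers but cannot eliminate the chain of reductions on the critical path, so the span remains $O(n^2)$. (One minor wording slip: each live product carries two, not four, $n/2$-by-$n/2$ input buffers --- four in total for the pair in flight --- but the recurrence you wrote already reflects the correct count.)
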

} % end punt

\subsecput{sar-strassen}{The SAR algorithm for Strassen}
\begin{lemma}
    There is a \proc{SAR-Strassen} algorithm that has optimal
    $O(\log n)$ time,
    $O(n^{\log_2 7})$ work, $O(p n^2)$ space, and optimal 
    $O(n^2/B + n^{\log_2 7}/(B M^{1/2 \log_2 7 - 1})$ 
    cache bound.
\label{lem:sar-strassen}
\end{lemma}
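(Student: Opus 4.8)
The plan is to carry the three ingredients behind \proc{SAR-MM} (\thmref{sar-mm})—the generalized busy-leaves property, a Last-In-First-Out (LIFO) memory allocator that guarantees reuse of equal-sized blocks per processor, and lazy allocation—over to the straightforward Strassen parallelization of Lemma~\ref{lem:par-strassen}. Concretely, \proc{SAR-Strassen} still spawns all seven products $P_1,\ldots,P_7$ of each recursion level simultaneously, but each product draws its temporary block from the program-managed pool (so equal-sized requests on one processor hit the same cached block) and, once finished, folds its result into the relevant quadrants of $C$ by atomic additions rather than through an all-level synchronization. Lazy allocation plays a reduced role here, since in Strassen every product needs a private temporary, so the decisive levers will be busy-leaves for space and block reuse for cache. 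I would then prove the four bounds separately.

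For time and work I would argue the SAR modifications leave the DAG of Lemma~\ref{lem:par-strassen} essentially intact: the trylock/unlock and the atomic writes replacing the level synchronization cost $O(1)$ per level and do not serialize independent products, so one product plus a constant number of additions/subtractions remain on each level of the critical path. Then $T_\infty(n)=T_\infty(n/2)+O(1)$ still solves to the optimal $O(\log n)$, and since no arithmetic is added or removed the work stays $O(n^{\log_2 7})$.

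For space I would invoke \thmref{general-busy-leaves}: under a busy-leaves scheduler at most $p$ tasks of any fixed depth are simultaneously executing or blocking. With the LIFO allocator each of the $p$ active processors then holds only the stack of temporaries along its current root-to-leaf call chain, whose block sizes form the geometric series $O(n^2)+O((n/2)^2)+\cdots=O(n^2)$; summing over processors gives a peak footprint of $O(p\,n^2)$. I would note that this conservative per-chain bound replaces the finer $\min\{p,4^d\}$ counting of \proc{SAR-MM}, and that $O(p\,n^2)$ already improves on the naive $O(n^{\log_2 7})$ whenever $p=o(n^{\log_2 7-2})$.

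The cache bound is where the real work lies, and I expect its base (stop) condition to be the main obstacle. The recurrence is inherited unchanged from Lemma~\ref{lem:par-strassen}, namely $Q_1(n)=7\,Q_1(n/2)+O(n^2/B)$; what changes is the leaf. Unlike the naive algorithm, whose fresh allocations are charged cold misses down to $Q_1(1)=O(1)$, the reuse guarantee of the LIFO allocator means that once the live working set of a dimension-$n$ call—its inputs $A,B$, its output $C$, and the geometrically-summing chain of reused temporaries, all of total size $O(n^2)$—fits in a constant fraction of $M$, no deeper recursion incurs further cold misses, so $Q_1(n)=O(n^2/B)$ as soon as $c\,n^2\le\epsilon M$. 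The delicate point I must verify is that serial, depth-first execution keeps the live temporary set a single $O(n^2)$ chain, so that the omniscient replacement policy of the ideal-cache model can pin it. Solving the recurrence down to a base case at $n=\Theta(\sqrt M)$ with $Q_1(\Theta(\sqrt M))=O(M/B)$, both the homogeneous term $7^{\log_2(n/\sqrt M)}\cdot O(M/B)$ and the particular term $\sum_i 7^i O((n/2^i)^2/B)$ (a sum of ratio $7/4>1$, hence top-heavy) collapse to $n^{\log_2 7}/(B M^{\frac12\log_2 7-1})$, yielding the claimed optimal $O(n^2/B+n^{\log_2 7}/(B M^{\frac12\log_2 7-1}))$.
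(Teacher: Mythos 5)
Your proposal is correct and follows essentially the same route as the paper: keep the time/work recurrences of Lemma~\ref{lem:par-strassen} unchanged, use the busy-leaves property plus per-processor LIFO reuse to turn the space recurrence into $p$ copies of a geometrically summing per-processor chain (the paper makes this concrete as $S_1(n)=S_1(n/2)+3(n/2)^2$, three temporaries per level for $S_r$, $T_r$, $P_r$), and change only the stop condition of the cache recurrence so that recursion below working-set size $\Theta(M)$ incurs no cold misses. The only cosmetic difference is that you bound the per-chain block count by $O(1)$ per level rather than exhibiting the explicit constant $3$; the resulting bounds are identical.
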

\begin{proof}
Observing that on any processor three copies of temporary 
matrices of size $n/2^k$-by-$n/2^k$ is sufficient 
for all subtasks at the depth $k$, we
have the following improved space recurrences with no change made
to the time recurrences (of \lemref{par-strassen}), 
which solve to $S (n) = p n^2$.
\begin{align*}
S (n) &= p S_1 (n) & S_1 (n) &= S_1 (n/2) + 3 (n/2)^2
\end{align*}
The three copies of temporary matrices are used to hold the
input and output matrices of $P_r = S_r \otimes T_r$ at each
depth, with $S_r$ and $T_r$ computed on the fly
from $A$ and $B$ respectively. The final quadrants of $C$ can 
be computed by reusing the space of $C$ and $P$'s.
Analogous to the \proc{SAR-MM} algorithm, the only change to
cache recurrences of \lemref{par-strassen}
is the stop condition. 
\begin{align*}
Q_1 (n) &= 7 Q_1 (n/2) + O(n^2/B) \\ 
Q_1 (n) &= O(n^2/B) \quad \text{if $(1 + 1/4 + \ldots ) 3 n^2 + 3 n^2 \leq \epsilon M$}
\end{align*}
In the stop condition, the term $(1 + 1/4 + \ldots ) 3n^2$ 
accounts for the summation of temporary space reused for the 
same-sized memory
requests on the same processor and $3 n^2$ accounts for input 
and output quadrants of $A$, $B$, and $C$.
The cache recurrences solve to the optimal 
$O(n^2/B + n^{\log_2 7}/(B M^{1/2 \log_2 7 - 1})$.
Again, this is because
the SAR algorithm enforces the reuse of temporary storage
of every depth on each processor.
\end{proof}

\subsecput{star-strassen}{The STAR algorithm for Strassen}
\begin{theorem}
    There is a \proc{STAR-Strassen} algorithm that has an 
    $O(p^{1/2} \log n)$ time, $O(p^{0.09} n^{\log_2 7})$ work, 
    optimal $O(n^2)$ space, and $O(p^{0.09} \cdot n^{\log_2 7}/(B M^{1/2 \log_2 7 - 1}) + p^{1/2} \cdot n^2/B)$
    serial cache bound.
\label{thm:star-strassen-1}
\end{theorem}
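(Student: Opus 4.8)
The plan is to build \proc{STAR-Strassen} as a two-phase hybrid in exact analogy with the \proc{STAR-MM} algorithm of \thmref{star-mm}: run a TAR-style \emph{general} eight-way decomposition (the split of \eqref{mm-dac}) for the top $\id{k}$ levels, then switch to the \proc{SAR-Strassen} algorithm of \lemref{sar-strassen} for every subproblem of dimension $n/2^{\id{k}}$. In the upper phase each of the eight products accumulates directly into its target quadrant of $\id{C}$, serializing only the (at most two) concurrent writes per quadrant as in \proc{TAR-MM}, so no product temporaries are held and the only extra resident space is the per-processor base storage. In the lower phase each live subproblem is computed by \proc{SAR-Strassen}, whose per-processor temporaries are reused across depths and whose simultaneously live count is capped at $p$ by \thmref{general-busy-leaves}. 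The construction is parameterized by the single switching depth $\id{k}$, fixed at the end to drive the space bound to optimal. I put the eight-way TAR head on top (rather than Strassen) precisely because its write-serialization gives a span multiplier of only $2$ per level, whereas a Strassen head would pay a factor of $4$ per level from the four products feeding $\id{C}_{00}$ and $\id{C}_{11}$.

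First I would split each of the four quantities at depth $\id{k}$. For time the upper levels obey $T_\infty(v)=2\,T_\infty(v/2)$ and the lower levels $T_\infty(v)=T_\infty(v/2)+O(1)$, so $T_\infty(n)=2^{\id{k}}\log(n/2^{\id{k}})$. For work the upper phase branches eight ways while each depth-$\id{k}$ Strassen leaf costs $(n/2^{\id{k}})^{\log_2 7}$, giving $T_1(n)=8^{\id{k}}(n/2^{\id{k}})^{\log_2 7}=(8/7)^{\id{k}}\,n^{\log_2 7}$. For space the upper phase adds nothing asymptotically and the lower phase reuses a single tower $S_1(v)=S_1(v/2)+3(v/2)^2$ per processor, so $S(n)=p\,S_1(n/2^{\id{k}})=p\,(n/2^{\id{k}})^2$. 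For serial cache I would bound $Q_1$ by the sum of the $8^{\id{k}}$ leaf \proc{SAR-Strassen} caches from \lemref{sar-strassen}, namely $8^{\id{k}}\bigl[(n/2^{\id{k}})^{\log_2 7}/(B M^{1/2\log_2 7-1})+(n/2^{\id{k}})^2/B\bigr]$, which is a legitimate upper bound since serial execution credits no cross-subproblem reuse beyond what each leaf formula already captures.

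Then I would set $\id{k}=(1/2)\log_2 p$, so that $2^{\id{k}}=\sqrt p$, $4^{\id{k}}=p$, and $8^{\id{k}}=p^{3/2}$. Substituting, the space collapses to $p\,n^2/4^{\id{k}}=O(n^2)$; the time becomes $\sqrt p\,\log(n/\sqrt p)=O(\sqrt p\,\log n)$; the work becomes $(8/7)^{(1/2)\log_2 p}n^{\log_2 7}=p^{(\log_2(8/7))/2}n^{\log_2 7}=O(p^{0.09}n^{\log_2 7})$; and the two cache terms become $(8/7)^{\id{k}}n^{\log_2 7}/(B M^{1/2\log_2 7-1})=O(p^{0.09}n^{\log_2 7}/(B M^{1/2\log_2 7-1}))$ and $2^{\id{k}}n^2/B=O(\sqrt p\,n^2/B)$, matching the claim term for term. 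The stated constant $0.09$ is simply $(1/2)\log_2(8/7)\approx 0.0965$ rounded down. An assumption of the form $p=o(n^2)$ keeps $\id{k}$ below the recursion depth and the hybrid work below the plain $O(n^3)$ of an all-MM scheme, exactly as $p=o(n^2/\log^2 n)$ supports \thmref{star-mm}.

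The hard part will not be the arithmetic but justifying that grafting an eight-way TAR head onto a Strassen tail is both correct and does not leak space or span. Correctness needs the upper-level atomic accumulations into $\id{C}$ to commute with the $\oplus/\ominus$ combinations of the $P_r$ produced by the Strassen leaves, which holds whenever $\oplus$ admits the inverse $\ominus$ that Strassen requires. The delicate accounting is the space bound: I must argue via \thmref{general-busy-leaves} that at depth $\id{k}$ at most $p$ of the $8^{\id{k}}$ possible subproblems are simultaneously live, so the correct multiplier on $S_1(n/2^{\id{k}})$ is $p$ and not the naive $8^{\id{k}}$. The whole point is to confine the $(8/7)^{\id{k}}$ blow-up to work and cache while keeping resident space at $p\,S_1$; establishing that separation cleanly is where the proof earns optimal $O(n^2)$ space at the cost of only a $p^{0.09}$ work overhead.
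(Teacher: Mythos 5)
Your proposal is correct and follows essentially the same route as the paper: a TAR-style eight-way head for the top $\id{k}=(1/2)\log_2 p$ levels (serializing only the two concurrent writes per quadrant) grafted onto \proc{SAR-Strassen} leaves, with the same recurrences, the same $8^{\id{k}}(n/2^{\id{k}})^{\log_2 7}=(8/7)^{\id{k}}n^{\log_2 7}$ work blow-up, and the same term-by-term substitution at $2^{\id{k}}=\sqrt p$. Your added remarks on why the head is eight-way rather than Strassen, and on confining the $(8/7)^{\id{k}}$ factor to work and cache while capping resident space at $p\,S_1$ via \thmref{general-busy-leaves}, are consistent with (and slightly more explicit than) the paper's argument.
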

\begin{proof}
We construct the \proc{STAR-Strassen} by employing the 
\proc{TAR-MM} algorithm (\thmref{tar-mm}) at upper levels of 
recursion and switching to the \proc{SAR-Strassen} algorithm 
(\lemref{sar-strassen})
after depth \id{k}, where \id{k} is a parameter to be determined
later. 

The recurrences of the total work, time and space bounds of the hybrid algorithm become:
\begin{align*}
    T_\infty (v) &= 2 T_\infty (v/2) + O(1) & \text{if $v > n/2^{k}$}\\
    T_1 (v) &= 8 T_1 (v) & \\
    S (v) &= S (v/2) & \\ 
    Q_1 (v) &= 8 Q_1 (v/2) \\
    T_\infty (v) &= O(\log_2 v) & \text{if $v \leq n/2^{k}$}\\
    T_1 (v) &= O(v^{\log_2 7}) & \\ 
    S (v) &= O(p v^2) & \\
    Q_1 (v) &= O(v^{\log_2 7}/(BM^{1/2 \log_2 7 - 1}) + v^2/B) &
\end{align*}
Before the depth reaches \id{k}, it inherits the
same recurrences of the \proc{TAR-MM} algorithm and switches
to those of \proc{SAR-Strassen} after depth \id{k}. 
Making $\id{k} = (1/2) \log_2 p$, we have the total work 
bound of $O(p^{1/2 (3 - \log_2 7)} n^{\log_2 7} \approx 
O(p^{0.09} n^{\log_2 7})$, a factor of $O(p^{0.09})$ larger
than the original Strassen, the time bound of
$T_\infty (n) = O(p^{1/2} \log_2 (n/\sqrt{p})) = 
O(p^{1/2} \log_2 n)$, a factor of $O(p^{1/2})$ longer, 
the space bound of $S (n) = O(n^2)$
, a factor of $O(n^{\log_2 7 - 2}) \approx O(n^{0.8})$ 
improvement, and the serial cache bound of 
$O(p^{3/2 - 1/2 \log_2 7} \cdot n^{\log_2 7}/(B M^{1/2 \log_2 7 - 1}) + p^{1/2} \cdot n^2/B)$, where $3/2 - 1/2 \log_2 7 
\approx 0.09$. 
Notice that the constant hidden in the big-Oh of
space bound is just $1$. That is, besides the $3 n^2$ space 
for the input and output matrices \id{A}, \id{B}, and \id{C}, 
the \proc{STAR-Strassen} algorithm just requires a total of 
$1 n^2$ extra temporary storage.
\end{proof}

Though a factor of $O(p^{0.09})$ increase in work and cache bound
seems small in theory, in practice the increase can matter.
Since the expected running 
time of a processor-oblivious algorithm on a $p$-processor 
system under an RWS scheduler is $T_1/p + O(T_\infty)$ 
\cite{BlumofeLe99}, the $T_1/p$ term will 
dominate given sufficient parallelism as in the case of our
\proc{STAR-Strassen} whose $T_\infty = O(p^{1/2} \log_2 n)$. 

\begin{theorem}
    There is an alternative \proc{STAR-Strassen} algorithm that 
    has an optimal $O(n^{\log_2 7})$ work, optimal $O(\log n)$
    time, near-optimal 
    $O(n^{\log_2 7}/(B M^{1/2 \log_2 7 - 1}) + 
    p^{1/2 \log_2 7 - 1} n^2/B)$ 
    cache and an $O(p^{1/2 \log_2 7} n^2)$ space bound.
\label{thm:star-strassen-2}
\end{theorem}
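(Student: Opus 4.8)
The plan is to build \proc{STAR-Strassen-2} as the exact dual of the construction in \thmref{star-strassen-1}: rather than trading work and time away to save space, I keep Strassen's $7$-way branching and full parallelism at \emph{every} level so that work stays $O(n^{\log_2 7})$ and the span stays $O(\log n)$, and I pay the price in space and cache. Concretely, I would run the straightforward parallel Strassen of \lemref{par-strassen} for the top $k$ levels of recursion and switch to \proc{SAR-Strassen} (\lemref{sar-strassen}) once the recursion reaches depth $k$, where $k$ is fixed later. The reason \thmref{star-strassen-1} lost optimality was its use of the general $8$-way \proc{TAR-MM} on top, whose write-serialization inflates the span and whose $n^3$ work inflates the total work; by refusing to use any \proc{TAR} stage I avoid both penalties, at the cost of letting the top levels allocate fresh temporaries.

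First I would write the cost recurrences piecewise. For $v > n/2^{k}$ the hybrid inherits the straightforward-Strassen recurrences
\begin{align*}
T_\infty(v) &= T_\infty(v/2) + O(1), & T_1(v) &= 7\,T_1(v/2) + O(v^2), \\
S(v) &= 7\,S(v/2) + O(v^2), & Q_1(v) &= 7\,Q_1(v/2) + O(v^2/B),
\end{align*}
and at the switch depth each of the $7^{k}$ subtasks of dimension $v_k = n/2^{k}$ inherits the closed forms of \lemref{sar-strassen}, namely $T_\infty(v_k)=O(\log v_k)$, $T_1(v_k)=O(v_k^{\log_2 7})$, $S(v_k)=O(p\,v_k^2)$, and $Q_1(v_k)=O\!\big(v_k^{\log_2 7}/(B M^{(1/2)\log_2 7 - 1}) + v_k^2/B\big)$. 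I would then expand the top $k$ levels as geometric sums: the span telescopes to $O(k)+O(\log v_k)$, the work and the boundary terms carry the multiplier $7^{k}$, and the additive top terms for space and cache sum with ratio $7/4$ and are dominated by their $d=k$ contribution.

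Setting $k = (1/2)\log_2 p$ (so that $2^{k}=\sqrt p$, $v_k = n/\sqrt p$, and $7^{k}=p^{(1/2)\log_2 7}$) should make everything fall into place. The span becomes $O(\log p)+O(\log n)=O(\log n)$; the work collapses because $7^{k} v_k^{\log_2 7}=p^{(1/2)\log_2 7}\cdot n^{\log_2 7} p^{-(1/2)\log_2 7}=n^{\log_2 7}$; the space is dominated by the boundary term $7^{k} S(v_k)=7^{k}\,O(p\,v_k^2)=O(p^{(1/2)\log_2 7} n^2)$; and the cache splits into the deep optimal Strassen term $7^{k} v_k^{\log_2 7}/(B M^{(1/2)\log_2 7-1}) = n^{\log_2 7}/(B M^{(1/2)\log_2 7-1})$ plus an overhead $O\!\big(7^{k} v_k^2/B + (7/4)^k n^2/B\big) = O(p^{(1/2)\log_2 7 - 1} n^2/B)$ coming from the boundary scans and the cold misses of the top temporaries.

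The hard part will be the space and cache bookkeeping at the switch depth, not the recursion itself. I expect three things to need care: (i) confirming that the stated exponents are reproduced exactly by multiplying \proc{SAR-Strassen}'s per-subtask bounds by the $7^{k}$ subtasks present at depth $k$ --- this is precisely where the $p^{(1/2)\log_2 7}$ in space and the $p^{(1/2)\log_2 7-1}$ in cache originate; (ii) showing that the top additive geometric series (ratio $7/4$, which exceeds $1$) are dominated by their last term so they do not raise the order; and (iii) being explicit that, exactly as in \lemref{sar-strassen}, the space accounting uses the pessimistic per-subtask figure $O(p\,v_k^2)$ rather than the tighter $\min\{p,7^{d}\}$ count of \thmref{general-busy-leaves}, which is what makes the space bound a factor of $p$ looser than the cache overhead. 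Finally I would note that $k=(1/2)\log_2 p$ is the choice that cancels $7^{k}$ against $v_k^{\log_2 7}$ to restore optimal work, mirroring the switch depth used throughout \secref{mm}.
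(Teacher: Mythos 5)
Your construction is exactly the paper's: run the straightforward parallel Strassen of \lemref{par-strassen} for the top $k=(1/2)\log_2 p$ levels, switch to \proc{SAR-Strassen} at depth $k$, and solve the same piecewise recurrences, arriving at identical bounds (the paper is terser and leaves $k$ implicit, but your explicit bookkeeping matches). One tiny quibble: the work is $O(n^{\log_2 7})$ for \emph{any} choice of $k$ since $7^{k}=(2^{k})^{\log_2 7}$ cancels automatically, so $k=(1/2)\log_2 p$ is dictated by the target space/cache tradeoff rather than by restoring optimal work.
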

\begin{proof}
An alternative way to construct a \proc{STAR-Strassen} algorithm 
is to employ the straightforward parallel Strassen algorithm 
(\lemref{par-strassen}) for the top \id{k} 
levels of recursion before switching to the 
\proc{SAR-Strassen} algorithm (\lemref{sar-strassen}). 
Obviously the work and time bound will stay asymptotically
optimal because both upper and lower levels are pure Strassen
with no control or data dependency inserted on critical path.
The recurrences for space and cache are as follows.
\begin{align*}
    S (v) &= 7 S (v/2) + O(v^2) & \text{if $v \geq n/2^k$}\\
    Q_1 (v) &= 7 Q_1 (v/2) + O(v^2/B) & \\
    S (v) &= O(p v^2) & \text{if $v < n/2^k$}\\
    Q_1 (v) &= O(v^{\log_2 7}/(B M^{1/2 \log_2 7 - 1}) + v^2/B) &
\end{align*}
Solving the recurrences, we have $S(n) = O(p^{1/2 \log_2 7} n^2)$
and $Q_1 (n) = O(n^{\log_2 7}/(B M^{1/2 \log_2 7 - 1}) + 
p^{1/2 \log_2 7 - 1} n^2/B)$.
\end{proof}

The alternative \proc{STAR-Strassen} in \thmref{star-strassen-2}
has asymptotically optimal work, time, and near-optimal cache
bound, so it may perform better in practice.

\punt{ % begin punt
\para{STAR technique for fast MM algorithms: }
The STAR technique should work for any Strassen-like fast
MM algorithm that trades more additions for less multiplications
such as the Strassen-Winograd \cite{DalbertoBoNi11} and so on.
All such algorithms can be 
formulated in a special model of bilinear circuit as follows
\begin{align}
    C_{i} &= \sum_{r=1}^{R(i)} \gamma_{i, r} \cdot \left (\sum_{p=1}^{P(i)} \alpha_{i, p} A_{p}\right ) \cdot \left (\sum_{q=1}^{Q(i)} \beta_{i, q} B_{q} \right ) 
\label{eq:bilinear}
\end{align}
where $C_i$, $A_p$ and $B_q$ stand for some sub-matrix of
the corresponding input / output matrices; 
$\gamma_{i, r}$, $\alpha_{i, p}$, 
and $\beta_{i, q}$ are some constants specific to the fast 
algorithm. 
Suppose that each matrix can be decomposed into 
a grid of $(n/b)^2$ equally sized sub-matrices at each recursion 
level 
(i.e. each sub-matrix will be of size $b$-by-$b$) and 
$R = \sum_{i=1}^{b^2} R(i)$.
By a straightforward parallel implementation, these 
fast algorithms have work and space bound of the form 
$S(n) = R S(b) + a (b)^2$, where $a \leq 3R$. The bound 
solves to $O(n^{\log_{n/b} R})$. All our analysis for the 
\proc{SAR-Strassen} and \proc{STAR-Strassen} will then work 
through. 
} % end punt

\secput{expr}{Experiments}

We have implemented the TAR, SAR, and STAR algorithms for
general matrix multiplication of data type ``double'', and
compared their performance with classic \proc{CO2}
and \proc{CO3} algorithms on a $24$-core shared-memory 
machine (\figref{machine-spec}). 
% The machine specification is listed in \figref{machine-spec}.

\punt{% begin punt
\begin{figure}[!ht]
    \begin{subfigure}[b]{0.9 \linewidth}
    \centering
    \includegraphics[width = \textwidth]{./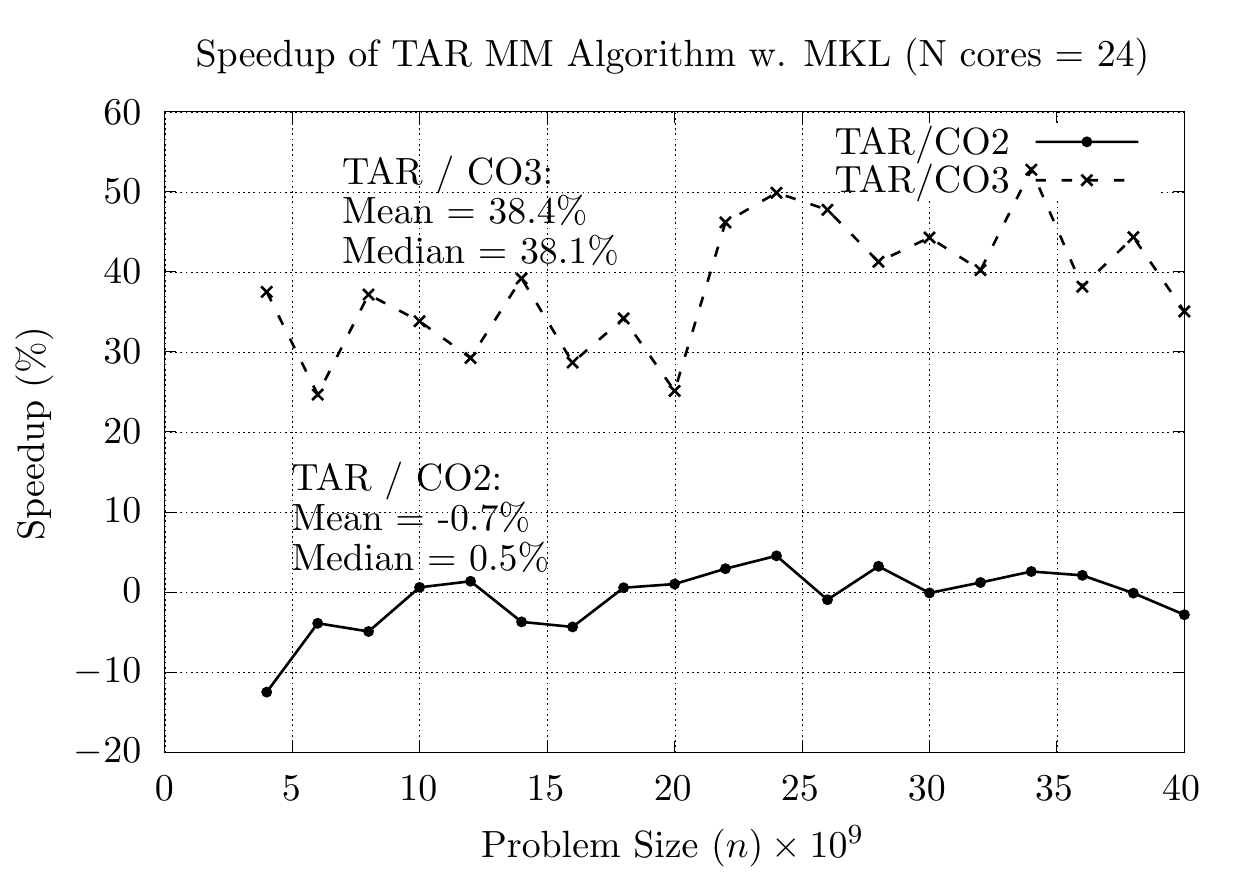}
    % \vspace*{-1\baselineskip}
    \caption{\proc{TAR-MM}'s speedup over \proc{CO2} and \proc{CO3} with MKL}
    \label{fig:24c-mm-distri}
    \end{subfigure}
    \vfil
    \begin{subfigure}[b]{0.9 \linewidth}
    \centering
    \includegraphics[width = \textwidth]{./mm_tar_raw_24c_distri.pdf}
    % \vspace*{-1\baselineskip}
    \caption{\proc{TAR-MM}'s speedup over \proc{CO2} and \proc{CO3} without MKL}
    \label{fig:24c-raw-mm-distri}
    \end{subfigure}
    % \vspace*{-1\baselineskip}
    \caption{Experiments on general MM algorithms on $24$-core machine.}
    \label{fig:24c-mm}
\end{figure}
}% end punt

\begin{figure}[!ht]
    \centering
    \includegraphics[width = .9 \linewidth]{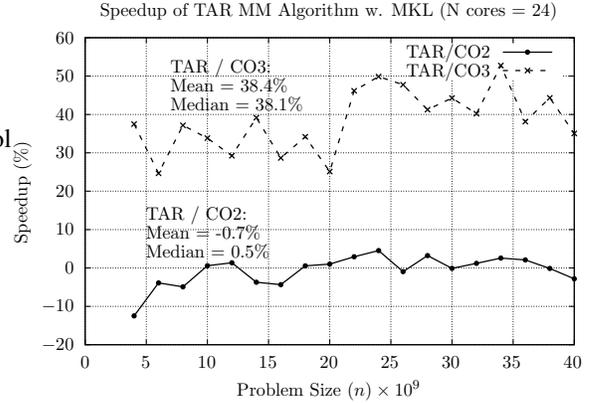}
    % \vspace*{-1\baselineskip}
    \caption{\proc{TAR-MM}'s speedup over \proc{CO2} and \proc{CO3} with MKL kernel}
    \label{fig:24c-mm-distri}
\end{figure}

\begin{figure}[!ht]
\centering
\begin{tabular}{cccc}
\toprule
\multicolumn{4}{c}{with MKL kernel}\\
\midrule
Mean/Median Spdp (\%) & TAR & SAR & STAR \\
\proc{CO2} & $-0.7$/$0.5$ & $-2.0$/$-1.0$ & $-2.6$/$-1.8$ \\
\proc{CO3} & $38.4$/$38.1$ & $36.6$/$36.5$ & $35.8$/$34.0$ \\
\midrule
\multicolumn{4}{c}{with manual kernel}\\
\midrule
Mean/Median Spdp ($\%$) & TAR & SAR & STAR \\
\proc{CO2} & $11.8$/$9.2$ & $9.3$/$6.8$ & $10.5$/$8.0$ \\
\proc{CO3} & $1.6$/$1.5$ & $-0.6$/$-0.5$ & $0.5$/$0.5$ \\
\bottomrule
\end{tabular}
\caption{Mean and Median Speedup of TAR, SAR, and STAR algorithms
over \proc{CO2} and \proc{CO3} with MKL kernel and with manually
implemented kernel.
All numbers shown in cell are in percentage. For an instance,
In above rows of ``with MKL kernel'', the cell in intersection
of TAR and \proc{CO2} reads ``$-0.7$/$0.5$'', which means with MKL
kernel, the mean speedup of TAR algorithm over \proc{CO2} is 
$0.7\%$ slower, while the median is $0.5\%$ faster.}
\label{fig:24c-mm-perf-table}
\end{figure}

\begin{figure}[!ht]
\caption{Experiementing Machine
%\footnote{The sizes of L1 icache on both machines are identical
%to that of dcache.}
}
\label{fig:machine-spec}
\begin{tabular}{c|c}
\toprule
Name & $24$-core machine \\
\midrule
OS   & CentOS 7 x86\_64 \\
Compiler & ICC 19.0.3 \\
% GCC Compatibility & 4.8.5  \\
CPU type & Intel Xeon E5-2670 v3 \\
Clock Freq & 2.30 GHz \\
\# sockets & 2 \\
\# cores / socket & 12 \\
Dual Precision & \\
FLOPs / cycle & 16 \\
Hyper-Threading & disabled \\
L1 dcache / core & 32 KB \\
% L1 icache / core & 32 KB \\
L2 cache / core & 256 KB \\
L3 cache (shared) & 30 MB \\
memory & 132 GB \\
\bottomrule
\end{tabular}
\end{figure}

In order for a \emph{Fair} performance comparison,
we require that all competing algorithms call the same 
kernal function for serial base-case computation
in the same round of comparison.
\punt{% begin punt
That is to say, all competing algorithms call Intel MKL's 
serial \func{dgemm} \footnote{\func{dgemm} computes 
double-precision matrix multiplication of 
$C = \alpha X \times Y + \beta C$,
where $A$, $B$, and $C$ are matrices and $\alpha$ and $\beta$
are scalars.} 
and \func{daxpy} \footnote{\func{daxpy} computes double-precision
vector addition of $Y = \alpha \times X + Y$, 
where $X$ and $Y$ are vectors and $\alpha$ is a scalar.}
subroutines for 
base-case matrix multiplication and addition in one round 
of performance comparison and a straightforward implemented 
matrix multiplication and addition in another round.
We have to mention that Intel MKL does not have any
subroutine for matrix addition, while \func{daxpy} is for 
vector addition. If we call \func{daxpy} multiple times for 
matrix addition, it will increase function-call
overheads. So the intermediate results of our new
algorithms are stored in Z-morton order, we restore it to 
row-major order by the end of algorithm with corresponding 
time included in overall timing.
}% end punt
Moreover, we mandate the same base-case size for serial 
computation in all competing algorithms.
Thus, the only difference among competing
algorithms is how they partition and schedule tasks.
\punt{% begin punt
Another point we have to mention is that all our new algorithms
, as well as \proc{CO3}, have an additional $O(n^3)$ overheads
of \func{daxpy} over \proc{CO2} because \proc{CO2} only calls
\func{dgemm}.

Due to space limit, we show the full comparing results 
of \proc{TAR-MM} algorithm over \proc{CO2} and \proc{CO3} 
in \figref{24c-mm}
, as well as mean and median speedups of other algorithms 
in \figref{24c-mm-perf-table}.
}% end punt
We compute speedup by 
``$(\text{running\_time}_{\text{peer alg.}} / 
\text{running\_time}_{\text{STAR}} - 1) \times 100\%$''.

\paragrf{Brief Summary: }
Our TAR algorithm performs consistently the fastest even with
an $O(n^3)$ additional overheads of \func{daxpy}; Other new
algorithms are generally faster than \proc{CO2} and \proc{CO3},
especially when problem
dimension is reasonably large. Interestly, with a relative
faster kernel, i.e. MKL's \func{dgemm} and \func{daxpy}, 
\proc{CO2} is faster than \proc{CO3}. While with a slower 
manually implemented kernel, \proc{CO3} becomes faster. 
More experimenting data can be presented in a full version.

\punt{% begin punt
\paragrf{More Details: }
\begin{enumerate}
    \item \afigref{24c-mm-distri} shows that if all competing 
        algorithms call a relatively faster kernel, i.e. Intel 
        MKL's \func{dgemm} and \func{daxpy} for base-case 
        computation, the mean speedup of our TAR algorithm 
        over \proc{CO2} is $-0.7\%$, i.e. $0.7\%$ slower in 
        average, while the median is $0.5\%$ faster;
        while the mean speedup over \proc{CO3} is $38.4\%$ and
        the median is $38.1\%$.

    \item Surprisingly, if all competing algorithms call a 
        relatively slower kernel, i.e. a straightforward 
        implementation of 
        matrix multiplication and addition kernel, we can see
        that \proc{CO3} algorithm becomes 
        faster than \proc{CO2}, with our TAR algorithm be the 
        fastest. We have double-checked all implementations and 
        repeated the experiments for multiple times, which all 
        confirmed this result.

        A possible explanation may be that with a slower kernel, 
        the critical path actually becomes longer, thus runtime 
        scheduler requires more subtasks, i.e. more parallelism,
        for dynamic load balance, which is the
        strength of \proc{CO3} and TAR algorithm. Moreover,
        current heap (interfaced by \func{calloc}/\func{free})
        or TLS (thread-local storage, interfaced by 
        \func{scalable\_calloc}/\func{scalable\_free}) seems to 
        have implemented the Last-In First-Out property as we
        discussed in \secref{tar-mm} by some degree so that memory 
        blocks of the same size / same recursion level are 
        largely reused in the \proc{CO3} code (\figref{mm-n3}).
        We also verified that
        \thmref{general-busy-leaves} (\secref{sar-mm}) holds
        in Cilk runtime system, i.e.
        at most $p$ subtasks of the
        same depth can exist in a $p$-processor system. 
        So the actual cache misses of \proc{CO3} is not as bad as
        $O(n^3/B)$ as generally calculated by 
        \eqreftwo{mm-n3-cache}{mm-n3-cache-stop} in \secref{intro}.

        As discussed in \secref{discussion}, our \proc{SAR-MM}
        algorithm differs from \proc{CO3} in its lazy allocation
        strategy.
        \afigref{24c-mm-perf-table} shows that this lazy 
        strategy generally pays off. Compared SAR with TAR and
        STAR, we can see that a further reduction in space and
        cache (even by a constant fraction) is necessary to get
        a better result.
\end{enumerate}
}% end punt

\secput{relWork}{Conclusion and Related Works}

\paragrf{Concluding Remarks: }
In this paper, we reviewed and analyzed classic 
matrix multiplication algorithms, \proc{CO2} and \proc{CO3},
in modern processor-oblivious runtime. The \proc{CO2} algorithm
has provably best work, space, and serial cache bounds, while
its longer critical-path length may incur more parallel cache
missees in a parallel setting. On the contrary, people used to
over-estimated \proc{CO3} algorithm's space and cache 
requirements.
We show that by the busy-leaves property \cite{BlumofeJoKu95}, 
we can derive \thmref{general-busy-leaves}, which is verified 
in popular Cilk runtime, such that there are no more than $p$
subtasks of the same depth in a $p$-processor system.
Moreover, by the Last-In First-Out memory allocation
strategy, which seems to be true in modern heap and TLS
(Thread-Local Storage) allocators, memory blocks are largely
reused in the \proc{CO3} algorithm.
By the above two properties, the classic \proc{CO3} algorithm
does not perform too bad compared with \proc{CO2} algorithm
in modern processor-oblivious runtime such as Intel Cilk Plus.
Interestingly, when employed a manually implemented (slower) 
kernel for
base-case computation, \proc{CO3} algorithm can even be faster
than \proc{CO2} probably due to its shorter critical-path length.
To further reduce space requirement or in other words maximize 
space reuse, thus minimize un-necessary cold cache misses, 
we propose a ``lazy allocation'' strategy for our
SAR and STAR algorithms.
Though we utilize atomic operations to check subtask's status
for lazy allocation, we believe that it's possible for
runtime system to provide such facility, which
can be our future work.
We also show how to possibly extend our approach to 
Strassen-like fast algorithms.

\paragrf{Related Works: }
\punt{% begin punt
Galil and Park \cite{GalilGi89, GalilPa94} proposed to solve DP
recurrences with more than $O(1)$ dependency by the methods of 
closure, matrix product, and indirection. 
% \punt{% begin punt
Maleki et al. \cite{MalekiMuMy14} presented in their paper
that certain dynamic programming problem called ``Linear-Tropical
Dynamic Programming (LTDP)'' can possibly obtain extra parallelism
by breaking data dependencies between stages.
% }% end punt
Chowdhury and Ramachandran \cite{ChowdhuryRa08} considered
a processor-aware approach that makes a different 
partitioning at different levels of recursion.
Tang et al. \cite{TangYoKa15} proposed Eager and Lazy 
cache-oblivious
wavefront (COW) technique to balance cache efficiency
and parallelism for a large class of DP algorithms.
Dinh et al. \cite{DinhSiTa16} formalized the approach to
ND model.  
}% end punt

% 3. Automatic generation of Strassen-like algorithm, which 
% doesn't help the space bound
Various optimizations, tradeoffs among work, 
space, time, communication bounds, on the general MM
on a semiring or Strassen-like fast algorithm 
has been studied for decades, including at 
least \cite{BensonBa15, BallardDeHo12, McCollTi99, SolomonikDe11,
KumarHuSa95, BoyerDuPe09, HuangSmHe16, SmithGeSm14}.
The basic idea behind these prior works is to switch 
manually back and forth between
a serial algorithm to save and reuse space and a parallel
algorithm to increase parallelism. Our approach differs
in the following ways. Firstly, our approach is dynamic 
load-balance, which is arguably more flexible and adaptive
on shared-memory system; 
Moreover, 
by generalizing the ``busy-leaves'' property of runtime 
schedulers, our technique upper-bounds space requirement to be 
asymptotically optimal without tuning; By a ``Last-In First-Out''
memory allocator and lazy allocation strategy, we bound 
cache efficiency to be asymptotically optimal without tuning; 
By having a sublinear critical-path length, we reduce
asymptotically parallel cache misses. 

Smith et al. \cite{SmithGeSm14} noticed divots in the 
performance curves of Intel MKL's \func{dgemm} when 
multiplying a matrix of size $m$-by-$k$ with a matrix of
size $k$-by-$n$ (a rank-$k$ update), where
both $m$ and $n$ are fixed to $14400$ and $k$ is very slightly
larger than a multiple of $240$.
We find the divots of Intel MKL's \func{dgemm} in 
square matrix multiplication where problem dimension is powers 
of two.

\punt{ % begin punt
They concluded the reason of
the divots as
the division of the rank-$k$ update into an optimal rank-$k_c$
update followed by a very small rank, which is very expensive.
They fixed the divots by performing only a single rank-$k$ update
with a $k$ that is larger than the optimal $k_c$. 
} % end punt

\punt{% begin punt
Shun et al. \cite{ShunBlFi13} alleviates the problem of
``concurrent writes'' to the same memory location 
by ``priority updates''. However, not all operations can be 
prioritized or reduced as in the case of general MM.
}% end punt

% \input{concl}

% \input{ack}

% We recommend abbrvnat bibliography style.

% \bibliographystyle{abbrv}
\bibliographystyle{IEEEtran}
\bibliography{papers}

% Generated by IEEEtran.bst, version: 1.14 (2015/08/26)
\begin{thebibliography}{10}
\providecommand{\url}[1]{#1}
\csname url@samestyle\endcsname
\providecommand{\newblock}{\relax}
\providecommand{\bibinfo}[2]{#2}
\providecommand{\BIBentrySTDinterwordspacing}{\spaceskip=0pt\relax}
\providecommand{\BIBentryALTinterwordstretchfactor}{4}
\providecommand{\BIBentryALTinterwordspacing}{\spaceskip=\fontdimen2\font plus
\BIBentryALTinterwordstretchfactor\fontdimen3\font minus
  \fontdimen4\font\relax}
\providecommand{\BIBforeignlanguage}[2]{{%
\expandafter\ifx\csname l@#1\endcsname\relax
\typeout{** WARNING: IEEEtran.bst: No hyphenation pattern has been}%
\typeout{** loaded for the language `#1'. Using the pattern for}%
\typeout{** the default language instead.}%
\else
\language=\csname l@#1\endcsname
\fi
#2}}
\providecommand{\BIBdecl}{\relax}
\BIBdecl

\bibitem{AgarwalBaGu95}
R.~C. Agarwal, S.~M. Balle, F.~G. Gustavson, M.~Joshi, and P.~Palkar, ``A
  three-dimensional approach to parallel matrix multiplication,'' \emph{IBM
  Journal of Research and Development}, vol.~39, pp. 575--582, Sep. 1995.

\bibitem{Cannon69}
L.~E. Cannon, ``A cellular computer to implement the kalman filter algorithm,''
  Ph.D. dissertation, Bozeman, MT, USA, 1969, aAI7010025.

\bibitem{AggarwalChSn90}
A.~Aggarwal, A.~K. Chandra, and M.~Snir, ``Communication complexity of prams,''
  \emph{Theor. Comput. Sci.}, vol.~71, no.~1, pp. 3--28, Mar. 1990.

\bibitem{DekelNaSa81}
E.~Dekel, D.~Nassimi, and S.~Sahni, ``Parallel matrix and graph algorithms,''
  \emph{{SIAM} Journal on Computing}, vol.~10, pp. 657--675, 1981.

\bibitem{Johnsson93}
S.~L. Johnsson, ``Minimizing the communication time for matrix multiplication
  on multiprocessors,'' \emph{Parallel Computing}, vol.~19, pp. 1235--1257,
  1993.

\bibitem{IronyToTi04}
D.~Irony, S.~Toledo, and A.~Tiskin, ``Communication lower bounds for
  distributed-memory matrix multiplication,'' \emph{J. Parallel Distrib.
  Comput.}, vol.~64, no.~9, pp. 1017--1026, Sep. 2004.

\bibitem{ChowdhuryRa08}
R.~Chowdhury and V.~Ramachandran, ``{Cache-efficient Dynamic Programming
  Algorithms for Multicores},'' in \emph{Proceedings of {ACM} Symposium on
  Parallelism in Algorithms and Architectures {(SPAA)}}, 2008, pp. 207--216.

\bibitem{BallardDeHo11}
G.~Ballard, J.~Demmel, O.~Holtz, and O.~Schwartz, ``Minimizing communication in
  numerical linear algebra,'' \emph{{SIAM} J. Matrix Analysis Applications},
  vol.~32, no.~3, pp. 866--901, 2011.

\bibitem{SolomonikDe11}
E.~Solomonik and J.~Demmel, ``Communication-optimal parallel 2.5d matrix
  multiplication and lu factorization algorithms,'' in \emph{Proceedings of the
  17th International Conference on Parallel Processing - Volume Part II}, ser.
  Euro-Par'11.\hskip 1em plus 0.5em minus 0.4em\relax Berlin, Heidelberg:
  Springer-Verlag, 2011, pp. 90--109.

\bibitem{IntelCilkPlus10}
\emph{Intel Cilk Plus Language Specification}, Intel Corporation, 2010,
  document Number: 324396-001US. Available from
  \url{http://software.intel.com/sites/products/cilk-plus/cilk_plus_language_specification.pdf}.

\bibitem{Leiserson10}
C.~E. Leiserson, ``The {Cilk++} concurrency platform,'' \emph{Journal of
  Supercomputing}, vol.~51, no.~3, pp. 244--257, March 2010.

\bibitem{BlumofeJoKu95}
R.~D. Blumofe, C.~F. Joerg, B.~C. Kuszmaul, C.~E. Leiserson, K.~H. Randall, and
  Y.~Zhou, ``{Cilk}: An efficient multithreaded runtime system,'' in
  \emph{Proceedings of the Fifth ACM SIGPLAN Symposium on Principles and
  Practice of Parallel Programming}, Santa Barbara, California, Jul. 1995, pp.
  207--216.

\bibitem{JaJa92}
J.~J{\'{a}}J{\'{a}}, \emph{An Introduction to Parallel Algorithms}.\hskip 1em
  plus 0.5em minus 0.4em\relax Addison-Wesley, 1992.

\bibitem{BlellochGiSi10}
G.~E. Blelloch, P.~B. Gibbons, and H.~V. Simhadri, ``Low depth cache-oblivious
  algorithms,'' in \emph{Proceedings of the Twenty-second Annual ACM Symposium
  on Parallelism in Algorithms and Architectures}, ser. SPAA '10.\hskip 1em
  plus 0.5em minus 0.4em\relax New York, NY, USA: ACM, 2010, pp. 189--199.

\bibitem{AroraBlPl98}
N.~S. Arora, R.~D. Blumofe, and C.~G. Plaxton, ``Thread scheduling for
  multiprogrammed multiprocessors,'' in \emph{SPAA '98}, Jun. 1998, pp.
  119--129.

\bibitem{AcarBlBl00}
U.~A. Acar, G.~E. Blelloch, and R.~D. Blumofe, ``The data locality of work
  stealing,'' in \emph{Proc. of the 12th ACM Annual Symp. on Parallel
  Algorithms and Architectures (SPAA 2000)}, 2000, pp. 1--12.

\bibitem{SpoonhowerBlGi09}
D.~Spoonhower, G.~E. Blelloch, P.~B. Gibbons, and R.~Harper, ``Beyond nested
  parallelism: Tight bounds on work-stealing overheads for parallel futures,''
  in \emph{Proceedings of the Twenty-first Annual Symposium on Parallelism in
  Algorithms and Architectures}, ser. SPAA '09.\hskip 1em plus 0.5em minus
  0.4em\relax New York, NY, USA: ACM, 2009, pp. 91--100.

\bibitem{FrigoLePr12}
M.~Frigo, C.~E. Leiserson, H.~Prokop, and S.~Ramachandran, ``Cache-oblivious
  algorithms,'' \emph{ACM Trans. Algorithms}, vol.~8, no.~1, pp. 4:1--4:22,
  Jan. 2012.

\bibitem{CormenLeRi09}
T.~H. Cormen, C.~E. Leiserson, R.~L. Rivest, and C.~Stein, \emph{Introduction
  to Algorithms}, 3rd~ed.\hskip 1em plus 0.5em minus 0.4em\relax The MIT Press,
  2009.

\bibitem{DinhSiTa16}
D.~Dinh, H.~V. Simhadri, and Y.~Tang, ``Extending the nested parallel model to
  the nested dataflow model with provably efficient schedulers,'' in
  \emph{SPAA'16}, Pacific Grove, CA, USA, 11 -- 13 2016.

\bibitem{BlumofeLe99}
R.~D. Blumofe and C.~E. Leiserson, ``Scheduling multithreaded computations by
  work stealing,'' \emph{JACM}, vol.~46, no.~5, pp. 720--748, Sep. 1999.

\bibitem{Strassen69}
V.~Strassen, ``Gaussian elimination is not optimal,'' \emph{Numerische
  Mathematik}, vol.~14, no.~3, pp. 354--356, 1969.

\bibitem{BensonBa15}
A.~R. Benson and G.~Ballard, ``A framework for practical parallel fast matrix
  multiplication,'' in \emph{Proceedings of the 20th ACM SIGPLAN Symposium on
  Principles and Practice of Parallel Programming}, ser. PPoPP 2015.\hskip 1em
  plus 0.5em minus 0.4em\relax New York, NY, USA: ACM, 2015, pp. 42--53.

\bibitem{BallardDeHo12}
G.~Ballard, J.~Demmel, O.~Holtz, B.~Lipshitz, and O.~Schwartz,
  ``Communication-optimal parallel algorithm for strassen's matrix
  multiplication,'' in \emph{Proceedings of the Twenty-fourth Annual ACM
  Symposium on Parallelism in Algorithms and Architectures}, ser. SPAA
  '12.\hskip 1em plus 0.5em minus 0.4em\relax New York, NY, USA: ACM, 2012, pp.
  193--204.

\bibitem{McCollTi99}
F.~W. McColl and A.~Tiskin, ``Memory-efficient matrix multiplication in the bsp
  model,'' \emph{Algorithmica}, vol.~24, no.~3, pp. 287--297, 1999.

\bibitem{KumarHuSa95}
B.~Kumar, C.~Huang, P.~Sadayappan, and R.~W. Johnson, ``A tensor product
  formulation of strassen's matrix multiplication algorithm with memory
  reduction,'' \emph{Scientific Programming}, vol.~4, no.~4, pp. 275--289,
  1995.

\bibitem{BoyerDuPe09}
B.~Boyer, J.-G. Dumas, C.~Pernet, and W.~Zhou, ``Memory efficient scheduling of
  strassen-winograd's matrix multiplication algorithm,'' in \emph{Proceedings
  of the 2009 International Symposium on Symbolic and Algebraic Computation},
  ser. ISSAC '09.\hskip 1em plus 0.5em minus 0.4em\relax New York, NY, USA:
  ACM, 2009, pp. 55--62.

\bibitem{HuangSmHe16}
J.~Huang, T.~M. Smith, G.~M. Henry, and R.~A. van~de Geijn, ``Implementing
  strassen's algorithm with blis,'' \emph{CoRR}, 2016.

\bibitem{SmithGeSm14}
T.~M. Smith, R.~v.~d. Geijn, M.~Smelyanskiy, J.~R. Hammond, and F.~G.~V. Zee,
  ``Anatomy of high-performance many-threaded matrix multiplication,'' in
  \emph{Proceedings of the 2014 IEEE 28th International Parallel and
  Distributed Processing Symposium}, ser. IPDPS '14.\hskip 1em plus 0.5em minus
  0.4em\relax Washington, DC, USA: IEEE Computer Society, 2014, pp. 1049--1059.

\end{thebibliography}

% \appendix
% \input{paren}
% \input{star-lws}
%\input{star-gap}
%\input{star-gap-app}

\end{document}